\documentclass[onecolumn,10pt]{IEEEtran}
\usepackage{flushend}
\usepackage{cite}
\usepackage{color}
\usepackage{epsf}
\usepackage{epsfig}
\usepackage{graphicx}
\usepackage{graphics}
\usepackage{epstopdf}
\usepackage[small]{caption}
\usepackage{amsmath}
\usepackage{amssymb}
\usepackage{amsthm}
\usepackage{amsxtra}
\usepackage[ruled]{algorithm2e}
\usepackage{algorithmic}
\usepackage{enumerate}
\usepackage{multirow}
\usepackage{enumerate}
\usepackage{amssymb}
\usepackage{lipsum}
\usepackage{setspace}
\usepackage{multirow}
\usepackage{wasysym}
\usepackage{enumerate}
\usepackage{amssymb}
\usepackage{lipsum}
\usepackage{setspace}
\usepackage{multirow}
\usepackage{subcaption}
\usepackage{array}
\usepackage{makecell}
\usepackage{geometry}
\usepackage{colortbl}
\usepackage{color,soul}
\geometry{margin=1.5cm}

\newtheorem{theorem}{\bf Theorem}
\newtheorem{proposition}{\bf Proposition}

\newtheorem{definition}{\bf Definition}
\newtheorem{remark}{Remark}

\usepackage{rotating} 
\usepackage{graphicx} 

\newcommand{\Rmnum}[1]{\expandafter\@slowromancap\romannumeral #1@}




\makeatother
\begin{document}
\title{A Coalition Formation Game Framework for Peer-to-Peer Energy Trading}
\author{Wayes Tushar$^{a,*}$, Tapan Kumar Saha$^a$, Chau Yuen$^b$, M. Imran Azim$^a$, Thomas Morstyn$^c$, H. Vincent Poor$^d$, Dustin Niyato$^e$, and Richard Bean$^a$\\
$^a$The University of Queensland, Brisbane, Australia\\
$^b$Singapore University of Technology and Design, Singapore\\
$^c$The University of Oxford, United Kingdom\\
$^d$Princeton University, NJ, USA\\
$^e$Nanyang Technological University, Singapore
\thanks{$^*$Corresponding author at: School of Information Technology and Electrical Engineering, The University of Queensland, Brisbane, QLD 4072, Australia.}
\thanks{E-main addresses: w.tushar@uq.edu.au (W. Tushar), saha@itee.uq.edu.au (T. K. Saha), yuenchau@sutd.edu.sg (C. Yuen), m.azim@uq.net.au (M. I. Azim), thomas.morstyn@eng.ox.ac.uk (T. Morstyn), poor@princeton.edu (H. V. Poor), dniyato@ntu.edu.sg (D. Niyato), r.bean1@uq.edu.au (R. Bean).}
}
\IEEEoverridecommandlockouts
\maketitle
\doublespace
\begin{abstract}
This paper studies social cooperation backed peer-to-peer energy trading technique by which prosumers can  decide how they can use their batteries opportunistically for participating in the peer-to-peer trading. The objective is to achieve a solution in which the ultimate beneficiaries are the prosumers, i.e., a prosumer-centric solution. To do so, a coalition formation game is designed, which enables a prosumer to compare its benefit of participating in the peer-to-peer trading with and without using its battery and thus, allows the prosumer to form suitable social coalition groups with other similar prosumers in the network for conducting peer-to-peer trading. The properties of the formed coalitions are studied, and it is shown that 1) the coalition structure that stems from the social cooperation between participating prosumers at each time slot is both stable and optimal, and 2) the outcomes of the proposed peer-to-peer trading scheme is prosumer-centric. Case studies are conducted based on real household energy usage and solar generation data to highlight how the proposed scheme can benefit prosumers through exhibiting prosumer-centric properties.
\end{abstract}
\begin{IEEEkeywords}
\centering
Peer to peer trading, prosumer-centric, game theory, coalition game, social cooperation.
\end{IEEEkeywords}
\section*{Nomenclature}
\addcontentsline{toc}{section}{Nomenclature}
\begin{IEEEdescription}[\IEEEsetlabelwidth{$V_1,~V_2,$}]
\item[$\mathcal{A,~B,~S}$]~Different coalitions.
\item[$b_n$]~Capacity of the battery of prosumer $n$.
\item[$b_{n,c}$]~Available capacity of the battery of prosumer $n$ at $t$.
\item[$\mathbb{D}_{hp}, \mathbb{D}_{c}$]~Symbols for stability of coalition.
\item[$E_{n,\text{pv}}(t)$]~Solar generation of prosumer $n$ at $t$.
\item[$E_{n,\text{dis}}(t)$]~Energy discharged from the battery of prosumer $n$ at $t$.
\item[$E_{n,\text{cha}}(t)$]~Energy charged to the battery of prosumer $n$ at $t$.
\item[$E_{n,\text{dem}}(t)$]~Total energy demand of prosumer $n$ at $t$.
\item[$E_{n,\text{hou}}(t)$]~Energy demand of prosumer $n$ for household activities at $t$.
\item[$E_{n,c}(t)$]~Own solar and battery energy consumption by prosumer $n$ at $t$.
\item[$E_{n,\text{def}}(t)$]~Energy deficiency of prosumer $n$ at $t$.
\item[$E_{n,\text{sur}}(t)$]~Energy surplus of prosumer $n$ at $t$.
\item[$E_{\text{chr,dis}}$]~Recommended maximum charging and discharging rate of the battery.
\item[$E_{\text{transfer}}(t)$]~Limit imposed on maximum energy transfer over the network.
\item[$E_c(t)$]~Total energy charged by prosumers at state 2 at $t$.
\item[$E_d(t)$]~Total energy discharged by prosumers at state 2 at $t$.
\item[$e_\text{max}(t)$]~$\max(E_{\text{chr,dis}},E_{\text{transfer}}(t))$.
\item[$e_{n,c}(t)$]~Energy charged by prosumer $n$'s battery at $t$.
\item[$e_{n,d}(t)$]~Energy discharged by prosumer $n$'s battery at $t$.
\item[$J_{n,b}(t)$]~Cost to prosumer $n$ at state 1 at $t$.
\item[$k$]~Scaling factor.
\item[$\mathcal{K}$]~Set of providers in coalition 2.
\item[$K$]~Total number of providers in coalition 2.
\item[$\mathcal{L}$]~Set of receivers in coalition 2.
\item[$L$]~Total number of receivers in coalition 2.
\item[$n,m$]~Index of prosumer.
\item[$N$]~Total number of prosumers.
\item[$\mathcal{N}$]~Set of all prosumers.
\item[$p_s(t)$]~Mid-market selling price per unit of energy at $t$.
\item[$p_b(t)$]~Mid-market buying price per unit of energy at $t$.
\item[$p_c(t)$]~Price per unit of energy charged by the battery at $t$.
\item[$p_d(t)$]~Price per unit of energy charged by the battery at $t$.
\item[$p_l$]~Degradation cost per kWh of the battery.
\item[$p_{n,c}(t)$]~Threshold price of prosumer $n$ to charge its battery at $t$.
\item[$p_{n,d}(t)$]~Threshold price of prosumer $n$ to discharge its battery at $t$.
\item[$p_{d,s2}(t)$]~P2P discharging price for prosumers in state 2 at $t$.
\item[$p_{c,s2}(t)$]~P2P charging price for prosumers in state 2 at $t$.
\item[$s_n(t)$]~State of charge of $n$'s battery at $t$.
\item[$t$]~Index of each time slot.
\item[$U_{n,s}(t)$]~Utility of prosumer $n$ at state 1 at $t$.
\item[$U_{n,c}(t)$]~Utility of prosumer $n$ at state 2 for charging its battery at $t$.
\item[$U_{n,d}(t)$]~Utility of prosumer $n$ at state 2 for discharging its battery at $t$.
\item[$U_\mathcal{\cdot}(n)$]~Utility of $n$ as a part of coalition $\mathcal{\cdot}$.
\item[$\mathcal{V}$]~Set of prosumers at state $2$.
\item[$\mathcal{W}$]~Set of prosumers at state $1$.
\item[$\alpha_n$]~Satisfaction parameter of prosumer $n$.
\item[$\beta$]~Pricing parameter.
\item[$\Gamma$]~Symbol for coalition formation game.
\item[$\nu$]~Value of coalition.
\end{IEEEdescription}
%
\section{Introduction}\label{sec:introduction}
The extensive growth in distributed energy resources in recent times is not only able to supply the growing energy demand of the consumers, but can also facilitate a notable mix of clean renewable energy into the grid \cite{Sousa_RSER_Apr_2019}. For this reason, it is critical to enable extensive participation of owners of these assets in the deregulated energy market to provide frequency control services, demand response, profit maximization, operation reliability, planning, combating uncertainty in generation, and managing networked distribution system~\cite{Peck_Spectrum_2017}.  As such, to enable prosumers participation, the Feed-in-Tariff (FiT) scheme has been in the market since the last decade\cite{Liang-ChengYe-AE:2017}. However, the benefit to prosumers for participating in the FiT has been proven to be very marginal~\cite{Tushar_SPM_2018}. Consequently, a number of FiT schemes has been discontinued in recent times \cite{FiTPolicy_2015}. Meanwhile, a new energy trading paradigm, known as peer-to-peer (P2P) energy trading, has emerged recently \cite{Jogunola_Energies_Dec_2017}.

P2P trading is an emerging economic model that allows commodities such as energy to go from one prosumer to another rather than from the grid to the consumer \cite{Morstyn_NE_2018,Tushar_SPM_2018}. This trading platform allows a prosumer to take advantage of other prosumers within its community who produce or have more energy than they need by buying energy from them at a relatively cheaper rate and vice versa. Due to its potential in revolutionizing the energy domain, several studies are conducted that focus on the financial model of P2P trading and the impact of network constraints on P2P. In developing the financing model, the authors in \cite{Si_AE_Dec_2018} investigate the economic impact of the energy complementary strategy to promote the sustainable development of the urban energy system. An optimization model is developed in \cite{Nguyen_AE_Oct_2018} for the photovoltaic (PV)-battery system in the P2P market with real-world constraints and market signals. A similar PV and battery-driven P2P energy trading model is also proposed in \cite{ChaoLong_AE_2018} through an aggregated two-stage battery control technology. Trading mechanisms that focus on empowering prosumers in the market are proposed in \cite{Melendez_AE_Aug_2019} and \cite{Paark_Sustainability_Mar_2018}. In the literature, integration of P2P trading in the energy market is also discussed via double-auction based~\cite{Al-Baz_AE_May_2019}, fairness based~\cite{Moret_TPWRS_EA_2018}, consensus-based~\cite{Sorin_TPWRS_Mar_2019}, negotiation-based~\cite{Moret_PSCC_June_2018}, generalized Nash equilibrium~\cite{Cadre_Elsevier_Press_2019}, and orchestrator based~\cite{Siozios_DATE_Mar_2019} approaches respectively.

In terms of investigating the impact of the physical network on P2P trading, \cite{Chapman_TSG_2018} and \cite{Guerrero_AUPEC_Nov_2017} discuss how P2P trading can potentially jeopardize the reliability of the power system by increasing bus voltages and propose potential solutions to avoid such circumstances. A Jacobi-proximal alternating direction method of multipliers (ADMM) is applied in \cite{Hamada_AE_Apr_2019} to design the P2P grid voltage support function for smart PV inverters with the purpose of regulating the voltages within a reasonable time. The impact of the level of transmission losses on P2P trading behaviors of retailers and consumers is discussed in \cite{Zhang_AE_Feb_2019} using a credit rating based multi-leader multi-follower game.  In \cite{Werth_TSG_2018}, the authors use the information and communications technologies (ICT) concept of network overlays and P2P networks to improve the resilience of the network against utility blackouts. ICT for local smart grids is also utilized in \cite{Pouttu_Conference_EUCNC_2017} to design a P2P model for distributed energy trading and grid control. Further, a number of pilot projects such as Brooklyn Microgrid in the USA \cite{Mengelkamp_AppliedEnergy_2017}, Valley Housing Project in Fremantle, Western Australia \cite{PowerLedger_2017}, and the P2P-SmartTest Project in Europe~\cite{SmartTest_AE_2018} are under development to demonstrate the benefit of P2P trading.

While the benefits of P2P trading are clear, the key concern that yet to be addressed is that how to develop a P2P trading mechanism, which is capable of ensuring a continuous and sustainable operation of the energy trading between prosumers. Finding a favourable answer to this question is particularly important due to following reasons: 1)  P2P trading emphasizes on transfer of energy among multiple prosumers with minimum (if not at all) influence and control from a central controller, e.g., the retailer. This establishes P2P as a trustless system \cite{PowerLedger_2017}. Therefore, encouraging prosumers to cooperate with one another for trading energy could be challenging; 2) Energy trading involving active prosumers participation is not a new phenomenon. However, a sustainable and continual operation of such a scheme cannot be guaranteed unless prosumers extensively participate in the trading. Therefore, it is utmost necessary that the developed energy trading schemes are \emph{prosumer-centric}, in which the main receipient of the benefits are the prosumers as both the buyer and the sellers of energy~\cite{Tushar-TSG:2014}. Otherwise, the P2P trading could potentially be discontinued like many other existing schemes, e.g., see \cite{FiTPolicy_2015} and \cite{Colley_Misc_2014}.

Given this context, we propose a P2P energy trading scheme to address these issues by exploiting \emph{social cooperation} among different prosumers. In particular, we use the framework of a cooperative game\cite{Thomas_PES_2018} to capture the interaction between different prosumers. However, unlike \cite{Thomas_PES_2018}, where a canonical coalition game is modeled, we use the framework of a coalition formation game to determine a stable network structure, in which each prosumer can decide whether it wants to be in state $1$ or state $2$ at a particular time slot for cooperating with other similar prosumers to participate in the P2P trading. State $1$  refers to the state of a prosumer, in which it does not charge or discharge its battery for participating in P2P trading. It only uses its surplus generation for participating in the market. State $2$ of a prosumer, on the other hand, refers to the state, in which it is willing to charge or discharge its battery for buying energy from or selling energy to the P2P market. The use of the proposed coalition formation game can effectively help prosumers to decide with whom to cooperate or not, which helps the designed energy trading model to achieve a \emph{prosumer-centric} solution. 

To this end, the main contributions of the paper are: 
\begin{itemize}
\item We propose a P2P energy trading scheme between different prosumers of an energy network by using a coalition formation game. The objective is to exploit the social cooperation between the prosumers to attain a prosumer-centric solution.
\item We develop an algorithm that helps a prosumer to select a state based on which it can choose whether or not it is beneficial to put its battery in the P2P market and subsequently form a suitable group based on the coalition game. Within the selected state, the algorithm also enables the prosumer to decide whether to charge or discharge its battery to participate in the proposed P2P trading.
\item We study the properties of the coalition formation game and show that the formed coalition among prosumers for P2P trading is stable and socially optimal that delivers prosumer-centric outcomes; and
\item We validate the properties of the proposed scheme through numerical simulation based on real consumers data.
\end{itemize}

While in a number of existing literature such as \cite{Wong_JSAC_July_2012}, \cite{Shams_Energy_July_2018}, and \cite{Baringo_TPWRS_May_2019}, prosumers make their decision of energy trading price through a day ahead scheduling, this study develops a P2P scheme that considers the situation at the current time slot to model the decision making process of the prosumers.  Such a model is particularly beneficial to apply in scenarios where the history of PV generation patterns and electricity demand of prosumers is not available, for example, communities where P2P trading is deployed as a smart energy management scheme for the first time. Note that example of such current time slot based energy management scheme can also be found in \cite{BoChai_TSG_2014,Maharjan_TSG_Mar_2013,Tushar_TSG_Press_2019} and \cite{Tushar_TSG_2_2016}. Further, we stress that prosumer-centric solutions have also been discussed in \cite{Morstyn_PWRS_2018} and \cite{Tushar_Access_Oct_2018}. However, the proposed study is different from them in terms of considered system model, game formulation, and analysis.

The rest of the paper is organized as follows. We introduce the system model for the proposed P2P trading in Section~\ref{sec:systemmodel}. We propose a P2P energy trading model based on a coalition formation game in Section~\ref{sec:GameFormulation}. Properties of the studied P2P trading are studied in Section~\ref{sec:properties} followed by numerical simulation results in Section~\ref{sec:casestudy}. Finally, we conclude the paper in Section~\ref{sec:conclusion}.

\section{System Model}\label{sec:systemmodel} 
We consider a P2P energy network consisting of a centralized power system (CPS) and $N$ houses, where $N=|\mathcal{N}|$. Each house $n\in\mathcal{N}$ acts as a prosumer, which is equipped with a solar panel and a battery of capacity $b_n$. Prosumer $n$ can use the energy from its solar panel, battery, or from the grid and other entities to meet its demand. Alternatively, it can transfer its surplus energy, if there is any, to the grid, to its own battery or to other prosumers. To do so, each prosumer is also equipped with a smart meter, which can 1) determine and record the generation of energy from the rooftop solar panel, 2) determine and record the consumed energy by the prosumer, 3) determine and record the state of charge (SoC) of the battery, 4) manage the charging and discharging of a battery, and 5) determine and record the energy that a prosumer $n$ sells to or buys from the CPS, the battery, or from another entity, when necessary. 

The considered P2P network is assumed to have two layers~\cite{Mengelkamp_AppliedEnergy_2017}: a physical layer and a virtual layer. The physical layer is responsible for the physical connection and transfer of energy between different energy prosumers within a network through a  distribution system (built and managed by an independent system operator). On the other hand, all prosumers can communicate, exchange information with one another, and decide on their traded energy amount and transaction price over the virtual layer, e.g., blockchain~\cite{PowerLedger_2017} and Elecbay~\cite{SmartTest_AE_2018}. The study presented in this paper focuses on the application of energy trading in the virtual layer of the P2P network.

We assume that at any time slot $t$ of the day,  the generation of solar energy from prosumer $n$'s solar panel is $E_{n,\text{pv}}(t)$, the amount of energy discharged from its battery is $E_{n,\text{dis}}(t)$, and the energy demand of prosumer $n$ is $E_{n,\text{dem}}(t) = E_{n,\text{hou}}(t)+E_{n,\text{cha}}(t)$. Here, $E_{n,\text{hou}}(t)$ is the energy that prosumer $n$ needs for its household activities and $E_{n,\text{cha}}(t)$ is the energy to charge the battery. Since the energy from the solar is free and from the battery is reasonably cheap\footnote{However, there is an associated battery degradation cost~\cite{Peterson_JPS_Apr_2018}.}, in reality the owner of the house prefers to use the solar and battery energy to meet the demand. Therefore, the amount of solar and battery self consumption by each house is
\begin{equation}
E_{n,c}(t) = \min\left(E_{n,\text{pv}}(t)+E_{n,\text{dis}}(t), E_{n,\text{dem}}(t)\right).
\label{Eqn:Equation1}
\end{equation}
Accordingly, based on the generation, demand, and consumption of energy, each prosumer $n$ may need to buy excess energy to meet its energy deficiency $E_{n,\text{def}}(t)$ or has surplus energy $E_{n,\text{sur}}(t)$ to sell. Accordingly,
\begin{equation}
E_{n,\text{sur}}(t) = \left(E_{n,\text{pv}}(t)+E_{n,\text{dis}}(t)\right) - \left(E_{n,\text{hou}}(t) + E_{n,\text{cha}}(t)\right),
\label{Eqn:Equation2}
\end{equation}
and
\begin{equation}
E_{n,\text{def}}(t) = \left(E_{n,\text{hou}}(t) + E_{n,\text{cha}}(t)\right) - \left(E_{n,\text{pv}}(t)+E_{n,\text{dis}}(t)\right).
\label{Eqn:Equation3}
\end{equation}
A prosumer $n$ can sell $E_{n,\text{sur}}(t)$ either to the CPS or to other prosumers in set $\mathcal{N}\setminus\{n\}$ of the network and can buy $E_{n,\text{def}}(t)$ from either of them. Note that in \eqref{Eqn:Equation2} and \eqref{Eqn:Equation3} the surplus and deficiency are calculated respectively after a prosumer already charges it battery from its own generation and discharges the battery for its household demand.

We further consider that the prosumers participating in the P2P energy trading choose to be either of the two states at any given time slot $t$. 
\paragraph{State~1} Prosumers in state $1$ do not charge or discharge their batteries for trading purposes and only sell their surplus $E_{n,\text{sur}}(t)$ to other prosumers in $\mathcal{N}\setminus\{n\}$ or to the grid when $E_{n,\text{sur}}(t) = \left(E_{n,\text{hou}}(t) + E_{n,\text{cha}}(t)\right) - E_{n,\text{pv}}(t)>0$. In other words, a prosumer in state $1$, which we denote as an element of set $\mathcal{W}$, does not put its battery in the energy trading market. Similarly, a state $1$ prosumer only buys energy, when necessary, to meet the deficient energy $E_{n,\text{def}}(t) = \left(E_{n,\text{pv}}(t)+E_{n,\text{dis}}(t)\right) - E_{n,\text{hou}}(t)$. Note that a prosumer without any battery always remains in state $1$.
\paragraph{State~2} Prosumers is state $2$ use their batteries for trading purposes. That is, a prosumer in state $2$, which we indicate via $\mathcal{V}$ is motivated, e.g., for more economical benefit, to discharge its battery for selling energy to other prosumers. Similarly, it may also buy energy from other peers to charge its battery for future needs. 

Clearly, the benefit, as we will see in the next section,  that a prosumer may reap for being a state $1$ or state $2$ during P2P trading could be different due to the additional benefit and cost of using a battery for trading. In fact, such differences influence a prosumer to dynamically choose its states at different time slots of P2P trading. To clarify this further, let us consider the toy example shown in Table~\ref{table:ToyExample}. 

In the table, we focus on how the state of six prosumers can be decided for two time slots. The P2P trading price for charging/discharging battery and the battery degradation price are also assumed to be similar for all of them. Now, due to different demand of energy and subsequent state of charge (SoC) in the battery, the utilities (calculated based on the discussion in the next section) for being in state 1 and state 2 are different for different prosumers. For example, at time slot $1$, both prosumer 1 and 2 are interested to charge their batteries (to be in state 2). However, the positive utility obtained from such charging enable  prosumer 1 to be in state 2, whereas prosumer 1 needs to be in state 1 to avoid the lower utility (which is negative) for becoming a state $2$ prosumer. However, they change their respective states based on the new preferences of charging and discharging in time slot $2$. Similar utility based choice is also shown form other prosumers. Note that when a prosumer does not have a battery (prosumer 6) or it does not want to charge or discharge despite having battery (prosumer 5 in time slot $1$) will remain in state 1.

Clearly, the utility that a prosumer obtain from its participation in the P2P trading influences its choice of different states. As such, we discuss different utility functions that mimic the benefits to each prosumer when they are in state 1 and state 2 in the following section.
\begin{table}[t!]
\centering
\caption{A toy example to demonstrate how prosumers may choose different states for P2P trading. The degradation price is taken from \cite{Peterson_JPS_Apr_2018} and P2P price is calculated following the mid-market rule~\cite{Tushar_Access_Oct_2018} by assuming a grid price of 26 cents per kWh~\cite{Queensland_Electricity_Bill_2019} and a FiT price of 10 cents per kWh~\cite{Queensland_FIT_Bill_2019}.}
\includegraphics[width=0.8\columnwidth]{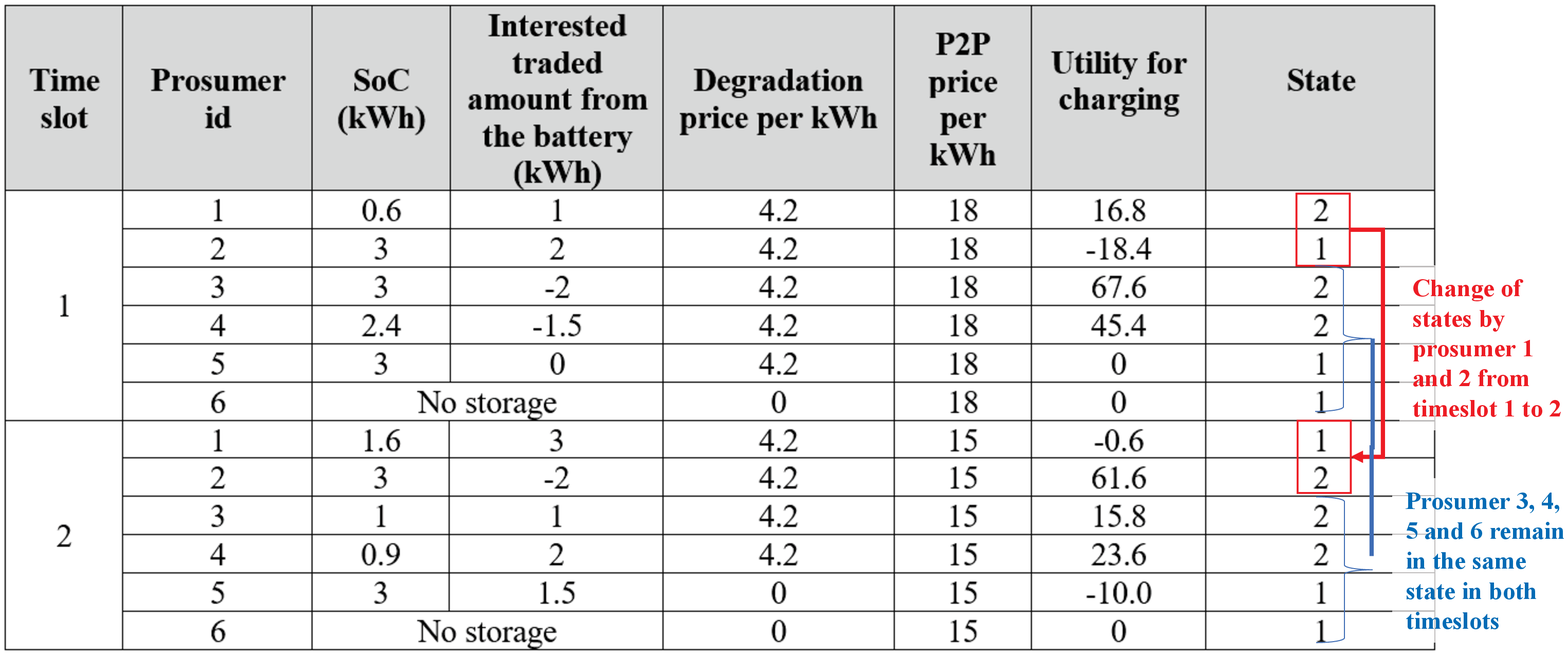}
\label{table:ToyExample}
\end{table}
\subsubsection{Utility of prosumer at state 1}For prosumer at state 1 the contribution of a prosumer's battery to the energy trading can be assumed to be $0$ when considering the flow of battery energy across different entities. Hence, from the perspective of P2P trading, each prosumer $n$ at state $1$ can be considered as a prosumer with the grid-tie solar system without battery~\cite{Tushar_TSG_2017}. Therefore, prosumers of this type trade their solar energy either with one another prosumer or with the grid without charging or discharging their batteries. In this context, if the mid-market buying and selling prices per unit of energy are $p_b(t)$ and $p_s(t)$ respectively (following \cite{Tushar_Access_Oct_2018}), the utility and cost to a prosumer $n$ can be defined as
\begin{equation}
U_{n,s}(t) = p_s(t) E_{n,\text{sur}}(t),~~\text{and}~J_{n,b}(t) = p_b(t) E_{n,\text{def}}(t)
\label{eqn:Utility_Cost_Type1}
\end{equation}
respectively for selling and buying their surplus and deficient energy.
\subsubsection{Utility of a prosumer at state 2}We note that the utility that a prosumer obtains from charging its battery can be defined as the sum of the relevant \emph{cost} and \emph{the benefit} of charging. Benefit of charging can be captured by using  a utility function with decreasing marginal benefit~\cite{Wayes-J-TSG:2012,Samadi_SmartGridComm_2010,BoChai_TSG_2014}. Such a property refers to the fact that the benefit to a prosumer for charging its battery eventually gets saturated with charging as the SoC $s_n(t)$ of the battery becomes close to its available capacity $b_{n,c}(t) = b_n - s_n(t)$. 

To this end, we consider the utility of a state 2 prosumer $n$ for charging its battery as:
\begin{equation}
U_{n,c}(t) = k[b_{n,c}(t)e_{n,c}(t) - \frac{1}{2}\alpha_n e_{n,c}(t)^2] - (p_{l}+p_{c}(t))e_{n,c}(t).
\label{Eqn:Utility_Battery_Charge}
\end{equation}
In \eqref{Eqn:Utility_Battery_Charge}, the term $(b_{n,c}(t) e_{n,c}(t) - \frac{1}{2}\alpha_n e_{n,c}(t)^2$ refers to the benefit of charging, $0<k\leq 1$ is a scaling factor, and $(p_{l}+p_{c}(t))e_{n,c}(t)$ is the cost of charging. Here, $p_l$ is the degradation cost per kWh of the battery~\cite{Ma_TCST_2015}, $p_{c}(t)$ is the price per unit of charging energy $e_{n,c}(t)\leq E_\text{chr,dis}$ (where, $E_\text{chr,dis}$ is the rated charging/discharging rate of the battery), and $\alpha_n>0$ is the satisfaction parameter of prosumer $n$~\cite{Wayes-J-TSG:2012}. 

In the similar way, the utility of a state 2 prosumer $n$ for discharging its battery can be defined as
\begin{equation}
U_{n,d}(t) = k[s_n(t) e_{n,d}(t) - \frac{1}{2}\alpha_n e_{n,d}(t)^2] + (p_{d}(t)-p_{l})e_{n,d}(t),
\label{Eqn:Utility_Battery_DisCharge}
\end{equation}
where $p_{d}(t)$ is the price per unit of discharged energy $e_{n,d}(t)\leq E_\text{chr,dis}$. However, unlike \eqref{Eqn:Utility_Battery_Charge}, $(p_{d}(t)-p_{l})e_{n,d}(t)$ in \eqref{Eqn:Utility_Battery_DisCharge} refers to the revenue that an $n$ obtains from discharging its battery. Further, 
\begin{equation}
e_{n,c}(t) \leq \min(e_\text{max}(t),b_{n,c}(t),E_{n,\text{sur}}(t)),~~~e_{n,d}(t) \leq \min(e_\text{max}(t),s_n,E_{n,\text{def}}(t))
\label{eqn:char-discha-Grp2}
\end{equation}
to ensure that battery charges or discharges neither more than its rated charging and discharging rate nor more than the network limit on the energy transfer. In other words, 
\begin{equation}
e_\text{max} = \max\left(E_\text{chr,dis}, E_\text{transfer}(t)\right), 
\label{eqn:emax}
\end{equation} 
where, $E_\text{transfer}(t)$ is network controller imposed maximum limit of energy transfer over the network at $t$ such that the node voltage does not violate the recommended limit for reliability and security of the network. Note that the parameters in \eqref{eqn:char-discha-Grp2} and \eqref{eqn:emax} for each prosumer needs to be chosen carefully to ensure that prosumers can reap maximum benefits from trading without violating any network limits. These could be decided by a regulatory body for prosumers of particular communities that want to involve in P2P trading. Examples of such regulatory decisions can also be found in other trading mechanisms. For example, in Queensland, Australia, if any prosumer would like to participate in FiT, it cannot install solar panels beyond 5kW capacity according to the regulation of the state. Such regulation has been imposed to enable prosumers to participate in energy trading without compromising network security. Please note that such a regulation should not affect a prosumer'€™s decision of how much energy it would like to trade and how much price it would like to charge per unit of energy. Rather, the purpose of regulation is to offer practical values for constraints like \eqref{eqn:char-discha-Grp2} and \eqref{eqn:emax} that will be beneficial for both prosumers and networks.

The utility functions proposed in  \eqref{Eqn:Utility_Battery_Charge} and \eqref{Eqn:Utility_Battery_DisCharge} possess the property of decreasing marginal benefit, which is ideal for power users~\cite{ Fahrioglu_GT_1999}. In particular, in the decision making process of charging and discharging of the battery, which has a fixed capacity, utility functions with decreasing marginal benefits are capable of modelling the decision making process of prosumers by capturing their benefits, which increases gradually to a certain point of state-of-charge and then decreasing again. Examples of such use of utility functions to model the benefit of power users can also be found in \cite{ChaiBo-TSG:2014, Wayes-J-TSG:2012, Samadi_TSG_Sept_2012}.


Now, based on the utility following \eqref{eqn:Utility_Cost_Type1}, \eqref{Eqn:Utility_Battery_Charge}, and \eqref{Eqn:Utility_Battery_DisCharge} that a prosumer achieves for trading its energy, it forms socially cooperative groups with neighboring prosumers to at different time slots in order to maximize its utility via P2P trading. In the next section, we propose a coalition formation game to capture this decision making process.
\section{Problem Formulation for P2P Trading}\label{sec:GameFormulation}
At any given time slot $t$, the decision of each prosumer $n$ to choose to be either in state 1 or in state 2 may potentially be affected by its production of solar energy, energy demand, energy price, its urgency of using energy for particular tasks, and how much utility it can achieve for interacting with other neighboring prosumers. Note that the decision of being in a particular state is taken by each prosumer individually without any central control. To this end, what follows is a study of the P2P trading price to identify under which condition a prosumer may become interested to charge or discharge its battery to participate in P2P trading with its neighbors. 
\subsection{Decision of a prosumer to charge and discharge the battery}
\subsubsection{Price condition for charging a battery}For any given price $p_{c}(t)$, the utility $U_{n,c}(t)$ to a prosumer for charging its battery is \eqref{Eqn:Utility_Battery_Charge}, where $\frac{\partial^2 U_{n,c}(t)}{{\partial e_{n,c}(t)}^2}=-k\alpha_n<0$. Thus, the preferred strategy $e_{n,c}(t)^*$ of a prosumer to maximize its utility is
\begin{equation}
e_{n,c}(t)^* = \arg \max U_{n,c}(t),
\label{eqn:ValueEnCh_1}
\end{equation}
and therefore,
\begin{equation}
e_{n,c}(t)^* = \left[\frac{1}{k\alpha_n}\left(k(b_{n}-s_n(t))-p_{l} - p_c(t)\right)\right]^{+},
\label{eqn:ValueEnCh_2}
\end{equation}
where $[\cdot]^{+} = \max(\cdot,0)$. Clearly, when a prosumer wants to charge its battery, $e_{n,c}(t)^*\geq 0$. Therefore, the maximum price $p_{n,c}(t)$ that the prosumer $n$ will to pay to charge its battery is 
\begin{equation}
p_{n,c}(t) = k(b_{n}-s_n(t)) - p_l.\label{eqn:ValueEnCh_4}
\end{equation}
Now, based on \eqref{eqn:ValueEnCh_2} and \eqref{eqn:ValueEnCh_4}, a prosumer's behavior towards charging its battery can be influenced by different variation of P2P price $p_c(t)$ as follows.
\begin{itemize}
\item When $p_c(t) = p_{n,c}(t)$, $e_{n,c}(t)^* = 0$. Therefore, the prosumer $n$ will not charge its battery.
\item When $p_c(t) > p_{n,c}(t)$, $e_{n,c}(t)^* < 0$. Therefore, the prosumer $n$ will not charge its battery.
\item When $p_c(t) < p_{n,c}(t)$, $U_{n,c}(t) >0 $ and $e_{n,c}(t)^* >0$. Therefore, the prosumer $n$ will charge its battery.
\end{itemize}

\subsubsection{Price condition for discharging a battery} Similarly, a prosumer $n$ receives utility $U_{n,d}(t)$ by discharging $e_{n,d}(t)$ from its battery. $U_{n,d}(t)$ attains its maximum value when $\frac{\partial U_{n,d}(t)}{{\partial e_{n,d}(t)}}=0$. Therefore,
\begin{equation}
e_{n,d}(t)^* = \arg\max U_{n,d}(t) = \left[\frac{1}{k\alpha_n}\left(ks_n(t)-p_l+p_d(t)\right)\right]^{+}.
\label{eqn:ValueEnDis_1}
\end{equation}
Now, following the same procedure as for the charging price, we can determine the minimum price that can motivate a prosumer to discharge its battery is
\begin{equation}
p_{n,d}(t) = p_l-ks_n(t).
\label{eqn:ValueEnDis_3}
\end{equation}
 To this end, based on \eqref{eqn:ValueEnDis_1} and \eqref{eqn:ValueEnDis_3}, the discharge of a prosumer's battery is influenced by the following variation of P2P price:
\begin{itemize}
\item When $p_d(t)=p_{n,d}(t)$, $e_{n,d}(t)^* = 0$. Therefore, the prosumer $n$ will not be interested to discharge its battery.
\item When $p_ d(t)>p_{n,d}(t)$, $e_{n,d}(t)^* > 0$. Therefore, the prosumer $n$ will discharge its battery.
\item When $p_d(t)<p_{n,d}(t)$, $e_{n,d}(t)^* <0$. Therefore, the prosumer $n$ will not discharge its battery. 
\end{itemize}

Thus, the overall decision of a prosumer to use its battery or not for P2P trading is affected by the value of  P2P charging and discharging prices. A summary of the possible variations and the subsequent decisions are summarized in Table~\ref{table:Table_List_Strategy}. Clearly, a prosumer will choose to be in state $1$ for case 3, whereas for other cases the choice would be state $2$. Note that the decision of energy trading between the batteries of different prosumers is also influenced by respective capacities and SoC of the batteries via \eqref{eqn:ValueEnCh_2} and \eqref{eqn:ValueEnDis_1}. 
\begin{table*}[t]
\centering
\caption{This table lists the strategies of a prosumer on its choice of charging and discharging its battery for the P2P trading purpose and its subsequent choice of state.}
\small
\begin{tabular}{|m{0.5cm}|m{4.5cm}|m{4.5cm}|m{8cm}|}
\hline
\textbf{Case} & \textbf{Value of P2P charging price} $p_c(t)$& \textbf{Value of P2P discharging price} $p_d(t)$ & \textbf{Strategy of prosumer} $n$ \textbf{and Corresponding State}\\
\hline
1 & $p_c(t)<p_{n,c}(t)$ & $p_d(t)\leq p_{n,d}(t)$ & Prosumer $n$ will charge (state $2$).\\
\hline
2 & $p_c(t)\geq p_{n,c}(t)$ & $p_d(t) > p_{n,d}(t)$ & Prosumer $n$ will discharge (state $2$).\\
\hline
3 & $p_c(t)\geq p_{n,c}(t)$ & $p_d(t)\leq p_{n,d}(t)$  & Prosumer $n$ will neither charge not discharge (state $1$).\\
\hline
4 & $p_c(t)<p_{n,c}(t)$ & $p_d(t)>p_{n,d}(t)$  & Prosumer $n$ will either charge or discharge based on it utility according to \eqref{Eqn:Utility_Battery_Charge} and \eqref{Eqn:Utility_Battery_DisCharge} (state $2$).\\
\hline
\end{tabular}
\label{table:Table_List_Strategy}
\end{table*}
\subsection{Coalition formation framework}
A coalition formation game $\Gamma$ can be formally defined by a pair $\Gamma = (\mathcal{N},\nu)$, in which $\mathcal{N}$ is the set of all participating players (i.e., prosumers in this case) and the value of coalition $\nu$. $\nu$ assigns a real number to every coalition $\mathcal{S}\subset\mathcal{N}$ for participating in P2P trading. Under the coalition formation game framework, participating prosumers form coalitions with one another to improve their respective utilities. In doing so, a prosumer may decide to leave a coalition to join a new coalition if its utility is improved by joining the selected new coalition. Indeed, a prosumer may also choose to take its own strategy without being a part of any coalition if that is more beneficial. Now, in order to study the decision making process of each prosumer to choose a certain coalition for P2P trading purpose, we first define the term \emph{Pareto order}. 
\begin{definition}
Consider two collections of coalition $\mathcal{A}$ and $\mathcal{B}$ of the same set of prosumers $\mathcal{N}$. For a prosumer $n\in\mathcal{N}$, $u_\mathcal{A}(n)$ and $u_\mathcal{B}(n)$ denote the utility to prosumer $n$ within coalition $\mathcal{A}$ and $\mathcal{B}$ respectively. Now, coalition $\mathcal{A}$ is preferred by prosumer $n$,~$\forall n\in\mathcal{N}$, over the coalition $\mathcal{B}$ by Pareto order, indicated by $\mathcal{A}\triangleright\mathcal{B}$, if $u_\mathcal{A}(n)\geq u_\mathcal{B}(n)$, with an equality for at least one player.
\label{def:definition_1}
\end{definition} 
Thus, Pareto order bases the preference on the individual payoffs to the prosumers rather than the coalition value $\nu$. Now, based on the Pareto order, each prosumer $n$ decides on the coalition it wants to form in each time slot during a day with the purpose to maximize its utility. To do so, we assume that each prosumer uses \emph{merge and split} rules for forming coalitions with one another.
\begin{figure}[t]
\centering
\includegraphics[width=0.6\columnwidth]{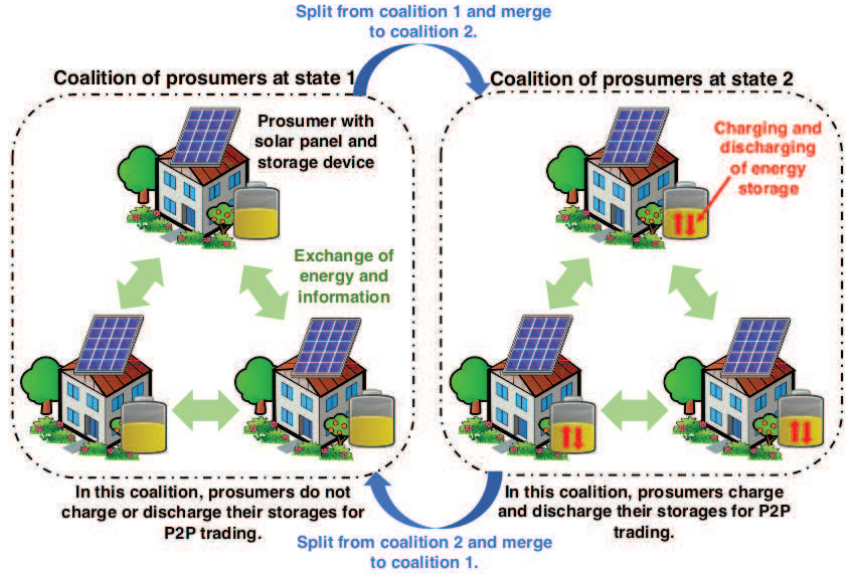}
\caption{This figure shows the social cooperation of prosumers of similar states (within a coalition) and their transfer between different coalitions as outlined in this study.}
\label{fig:coalitionformation}
\end{figure}
\subsubsection*{Merge and split rules} Merge and split rules are two fundamental rules for forming or breaking coalitions based on Pareto order~\cite{Krzysztof_GTR_2009}. They can be defined as follows.
\begin{definition}
A collection of disjoint coalition $\{\mathcal{A}_1, \mathcal{A}_2, \hdots, \mathcal{A}_i\},~\mathcal{A}_i\subset\mathcal{S}$ can agree to merge into a single coalition $\mathcal{B} =\cup_{n=1}^i \mathcal{A}_n$, if the new coalition $\mathcal{B}$ is preferred over $\{\mathcal{A}_1, \mathcal{A}_2, \hdots, \mathcal{A}_i\}$, i.e., 	$\mathcal{B}\triangleright\{\mathcal{A}_1, \mathcal{A}_2, \hdots, \mathcal{A}_i\}$, by the players according to the Pareto order described in Definition~\ref{def:definition_1}.
\label{def:definition_2}
\end{definition}
\begin{definition}
A coalition $\mathcal{B} =\cup_{n=1}^i \mathcal{A}_n,~\mathcal{A}_n\subset\mathcal{S}$ can split into smaller coalitions $\{\mathcal{A}_1, \mathcal{A}_2, \hdots, \mathcal{A}_i\}$ if the resulting coalitions are preferred by the players over $\mathcal{B}$, i.e., $\{\mathcal{A}_1, \mathcal{A}_2, \hdots, \mathcal{A}_i\}\triangleright\mathcal{B}$ , as per the Pareto order in Definition~\ref{def:definition_1}.
\label{def:definition_3}
\end{definition}
The reorganisation of prosumers in different coalitions following the merge-and-split rule is usually conducted in multiple iterations. In each iteration, all the participating prosumers engage to form new coalitions to maximize their utilities. A graphical representation of how prosumers choose different coalition based on merge and split rule in the proposed system is shown in Fig.~\ref{fig:coalitionformation}. 
\subsection{Coalition formation algorithm}
\begin{figure}[t]
\centering
\includegraphics[width=0.6\columnwidth]{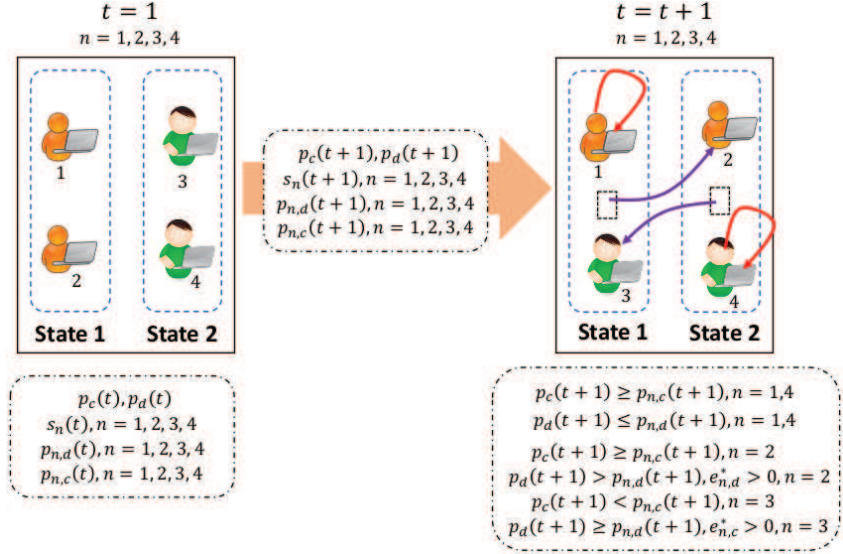}
\caption{This figure shows an example of how four prosumers decide on their respective states based on the price and available energy when they they move from one time slot to the next.}
\label{fig:Algorithm}
\end{figure}
\begin{algorithm}[t!]
\caption{Algorithm to form a stable coalition structure for P2P trading.}
\label{alg:algorithm1}
\begin{algorithmic}[1]
\footnotesize
\FOR{time slot $t=1$ to $T$}
\STATE Set P2P discharge price $p_d(t) = \frac{p_b(t) + p_s(t)}{2}$.
\STATE Set P2P charge price $p_c(t) = (1+\beta)p_d(t)$.
\FOR{Each prosumer $n\in\mathcal{N}$}
\STATE Determine the solar generation $E_{n,\text{pv}}(t)$. 
\STATE Determine energy demand $E_{n,\text{dem}(t)}$ and battery SoC $s_n(t)$.
\IF{$E_{n,\text{pv}}(t)>E_{n,\text{dem}}(t)$}
\STATE $E_{n,\text{cha}}(t) = {\min}((b_n-s_n(t)), e_\text{max}, (E_{n,\text{pv}}(t)-E_{n,\text{dem}}(t)))$.
\STATE $s_n(t) = s_n(t-1)+\eta E_{n,\text{cha}}(t) $.
\STATE $E_{n,\text{sur}}(t) = E_{n,\text{pv}}(t)-(E_{n,\text{dem}}(t)+E_{n,\text{cha}}(t))$.
\ELSIF{$E_{n,\text{pv}}(t)<E_{n,\text{dem}}(t)$}
\STATE $E_{n,\text{dis}}(t) = {\min}((s_n(t)-s_{n,\text{min}}), e_\text{max}, (E_{n,\text{dem}}(t)-E_{n,\text{pv}}(t)))$.
\STATE $s_n(t) = s_n(t-1)-\eta E_{n,\text{dis}}(t) $.
\STATE $E_{n,\text{def}}(t) = E_{n,\text{dem}}(t)-(E_{n,\text{pv}}(t)+E_{n,\text{dis}}(t))$.
\ENDIF
\IF{$E_{n,\text{def}}(t)>0$}
\STATE Prosumer $n$ chooses to be type $1$ of set $\mathcal{W}$.
\ELSE
\STATE Determine $p_{n,c}(t)$ and $p_{n,d}(t)$ following \eqref{eqn:ValueEnCh_4} and \eqref{eqn:ValueEnDis_3} respectively.
\STATE Each prosumer follow Table~\ref{table:Table_List_Strategy} to \emph{initially} decide whether it wants to be in $\mathcal{W}$ or $\mathcal{V}$ prosumer.
\STATE For type $2$, providers and receivers become elements of set $\mathcal{K}$ and $\mathcal{L}$ respectively, i.e., $\mathcal{V} = \mathcal{K}\cup\mathcal{L}$.
\ENDIF
\ENDFOR
\STATE Total available energy from $\mathcal{K}$ is $\sum_{n=1}^K e_{n,d}(t)$.
\STATE Total demand from $\mathcal{L}$ is $\sum_{m=1}^L e_{m,c}(t)$.
\IF{$\sum_{n=1}^K e_{n,d}(t)>\sum_{m=1}^L e_{m,c}(t)$}
\STATE All prosumers in $\mathcal{L}$ finally chooses type 2.
\STATE Calculate excess energy $E_{\text{ex},s}(t) = \sum_{n=1}^K e_{n,d}(t)-\sum_{m=1}^L e_{m,c}(t)$.
\STATE Set $e_{n,d}(t) = e_{n,d}(t)-\min(e_{n,d}(t),\frac{E_{\text{ex},s}(t)}{K})$,~$\forall n\in\mathcal{K}$.
\IF {$e_{n,d}(t)>0$}
\STATE Prosumer $n$ decides to be $\mathcal{K}$ in type 2.
\ELSE\STATE Prosumer $n$ decides to be in $\mathcal{W}$.
\ENDIF
\ELSIF{$\sum_{n=1}^K e_{n,d}(t)<\sum_{m=1}^L e_{m,c}(t)$}
\STATE All prosumers in $\mathcal{K}$ finally chooses type 2.
\STATE Calculate excess demand $E_{\text{ex},d}(t) = \sum_{m=1}^L e_{m,c}(t)-\sum_{n=1}^K e_{n,d}(t)$.
\STATE Set $e_{m,c}(t) = e_{m,c}(t)-\min(e_{m,c}(t),\frac{E_{\text{ex},d}(t)}{L})$,~$\forall m\in\mathcal{L}$.
\IF {$e_{m,c}(t)>0$}
\STATE Prosumer $n$ decides to be $\mathcal{L}\subset\mathcal{V}$.
\ELSE\STATE Prosumer $n$ decides to be in $\mathcal{W}$.
\ENDIF
\ELSE\STATE All prosumers in $\mathcal{L}$ and $\mathcal{K}$ belong to type 2.
\ENDIF\\
\STATE Each prosumer $n$ that was in $\mathcal{N}/\mathcal{V}$ in the previous time slot and decides to be in $\mathcal{V}$,  splits from $\mathcal{N}/\mathcal{V}$ and merges to the prosumers in $\mathcal{V}$.
\STATE Each prosumer $n$ that was in $\mathcal{N}/\mathcal{W}$ in the previous time slot and decides to be in $\mathcal{W}$,  splits from $\mathcal{N}/\mathcal{W}$ and merges to the prosumers in $\mathcal{W}$.
\STATE A stable coalition structure for P2P trading is achieved for time slot $t$.
\ENDFOR
\end{algorithmic}
\end{algorithm}
The formation of coalitions between different prosumers for P2P trading is influenced by a prosumer's decision to socially interact with other prosumers for P2P energy trading. To take the decision, in time slot $t$, a prosumer first meets its demand by energy from its solar panels and storage. Second, it calculates its available surplus, deficiency, and the SoC of the battery. Third, based on the price threshold of the prosumer from \eqref{eqn:ValueEnCh_4} and \eqref{eqn:ValueEnDis_3} and the available energy prices for P2P trading, which is determined by a mid-market rate rule \cite{Tushar_Access_Oct_2018}, a prosumer decides whether it wants to charge or discharge its battery to participate in P2P trading. In other words, each prosumer $n$ decides whether it wants to be in state $1$ or $2$ and utilize the properties of Pareto order to choose the appropriate coalition.

Prosumers that are in state 1 form coalition 1 (Fig.~\ref{fig:coalitionformation}) and conduct P2P trading among themselves using the mechanism proposed in \cite{Tushar_Access_Oct_2018} via a mid-market pricing scheme. As for prosumers in state $2$, each prosumer needs to decide either be a provider or a receiver of energy and thus be in the set $\mathcal{K}$ or $\mathcal{L}$ respectively. Once the decision is taken by prosumers, as shown in Fig.~\ref{fig:Algorithm}, each prosumer $n$ who was in state $1$ in the previous time slot $t-1$ and would like to be in state $2$ at $t$, splits itself from its previous coalition and merges with the players who decide to be in state $2$ and thus becomes a part of a new coalition. A similar transition can also happen for state $2$ players. Finally, P2P trading between the receivers and providers within coalition $2$ is conducted using prices derived in \eqref{eqn:case1_pd} and \eqref{eqn:case1_pc}. The detail of the proposed Pareto order based step-by-step coalition formation algorithm is shown in Algorithm \ref{alg:algorithm1}.

As can be seen from Algorithm \ref{alg:algorithm1}, the formation of a coalition is dynamic in nature. That is, the coalition formation algorithm runs in every time slot to reconfigure the members of each coalition that would achieve the maximum benefit for the prosumers. The correlation between different time slots is captured through the state of charge of storage devices that varies across different time slots according to its charging and discharging pattern. Of course, it is also possible that the same coalition structure may hold for multiple consecutive time slots if Algorithm \ref{alg:algorithm1} identifies that structure is beneficial for the relevant time slots.

Here, it is important to note that, in the proposed coalition formation structure, there is no coalition of prosumers who do not charge/discharge batteries with who will discharge/charge. This is due to the fact that, as the proposed scheme is designed, the charging and discharging price needs to satisfy the conditions explained in Table~\ref{table:Table_List_Strategy}, where the threshold price $p_{n,c}(t)$ and $p_{n,d}(t)$ include the associated charging and discharging degradation cost of batteries. The motivation behind the choice of such prices is the fact that when a prosumer may want its battery to be charged or discharged for trading in the peer-to-peer market, the trading price should reflect the associated battery degradation cost due to charging and discharging, and at the same time, have enough incentive to make the price beneficial for prosumers to put their batteries into the market. Now, prosumers who participate in peer-to-peer trading market without battery do not need to consider such degradation cost and therefore the trading price does not include the additional item to compensate for the cost of battery degradation \cite{Long_Conf_2017, Tushar_Access_Oct_2018}. Given this context, as the proposed model is designed, these prices will always fall within the case 3 of Table \ref{table:Table_List_Strategy}, which subsequently removes the possibility of a trade between prosumers who do not charge/discharge batteries with prosumers who will discharge/charge at the same time.

Indeed, other pricing schemes may regulate the trading in a different way, where the pricing conditions may allow prosumers who do not charge/discharge batteries with prosumers who will discharge/charge at the same time. Meanwhile, the proposed scheme is still valid in terms of proving the importance of social cooperation between prosumers to achieve a prosumer-centric solution, while still participate in peer-to-peer trading. Further, the proposed scheme also help prosumers to decide when they should use batteries to participate in the trading and when they should participate without putting their batteries into the market. 

\subsection{Trading of energy} Once a set of stable coalitions are formed following Algorithm~\ref{alg:algorithm1} in each round of the game, all prosumers in each coalition begins to trade their energy with one another. We have discussed how the prosumers in a coalition of state $1$ players do P2P trading in \cite{Tushar_Access_Oct_2018}. As for P2P energy trading prosumers at state $2$, the energy trading parameters can be determined as follows.

The total energy supplied by $K$ providers for the P2P trading is
\begin{equation}
E_d(t) =\sum_{n=1}^K e_{n,d}(t)^*= \sum_{n=1}^K \frac{1}{k\alpha_{n,d}}\left(ks_{n,d}(t) - p_l + p_d(t)\right).
\label{eqn:analysis_1}
\end{equation}
Without loss of generality, it can be considered that $\alpha_{n,d} = \alpha_d,~\forall n\in\mathcal{K}$. Hence, \eqref{eqn:analysis_1} can be expressed in its simple form as
\begin{equation}
E_d(t) = \frac{\sum_{n=1}^K s_{n,d}(t)}{\alpha_{d}} - \frac{(p_l-p_d(t))K}{k\alpha_{d}}.
\label{eqn:analysis_2}
\end{equation}
Similarly, total amount of energy received by the receivers of the coalition can be expressed by
\begin{equation}
E_c(t) =\sum_{m=1}^L e_{m,c}(t)^*= \frac{\sum_{m = 1}^L (b_{m,c}(t)-s_{m,c}(t))}{\alpha_c}-\frac{(p_l + p_c(t))L}{k\alpha_c},
\label{eqn:analysis_3}
\end{equation}
assuming $\alpha_{m,c} = \alpha_c.~\forall m\in\mathcal{L}$. 

Now, let us consider that the P2P charging price $p_c(t)$ and discharging price $p_d(t)$ per unit of energy are related to one another via~\cite{RongYu_ITJ:2014}
\begin{equation}
p_c(t) = (1+\beta)p_d(t),
\label{eqn:price_analysis}
\end{equation}
where $\beta>0$. The relationship in \eqref{eqn:price_analysis} is motivated by the fact that a prosumer needs to pay the government tax and a subscription fee, captured via $\beta$, for using the transmission line provide by the grid for P2P purpose. Now, the flow of energy for charging and discharging batteries via P2P establishes that the total energy charged by the receivers batteries should be equal to the total energy discharged by the providers in $\mathcal{V}$, i.e., $E_d(t) = E_c(t)$. 

To this end, by equating \eqref{eqn:analysis_2} to \eqref{eqn:analysis_3} and replacing the value of $p_c(t)$ with $p_d(t)$ from \eqref{eqn:price_analysis}, the trading prices $p_c(t)$ and $p_d(t)$ for prosumers at state $2$ can be calculated as:
\begin{align}
\small
p_{d,s2}(t) = \frac{p_l(\alpha_c K - \alpha_d L) - k(\alpha_c\displaystyle\sum_{n=1}^{K}s_{n,d}(t) - \alpha_d\displaystyle\sum_{m=1}^{L}(b_{m,c}(t) - s_{m,c}(t)))}{\alpha_c K +\alpha_d(1+\beta)L};
\label{eqn:case1_pd}
\end{align}
\begin{align}
p_{c,s2}(t)=\frac{p_l(\alpha_c K - \alpha_d L) - k(\alpha_c\displaystyle\sum_{n=1}^{K}s_{n,d}(t) - \alpha_d\displaystyle\sum_{m=1}^{L}(b_{m,c}(t) - s_{m,c}(t)))}{\frac{\alpha_c K +\alpha_d(1+\beta)L}{(1+\beta)}}.
\label{eqn:case1_pc}
\end{align}
Thus, $p_{d,s2}(t)$ and $p_{c,s2}(t)$ are the discharging and charging prices used by prosumers at state 2 to trade their battery energy over the P2P network.
\begin{remark}
Please note that in the manuscript we do not consider the scenario, in which a prosumer at state 2 can sell its battery energy to the CPS. This mainly due to the fact that such a trading model relies on a pricing scheme set by the CPS that provides prosumers with very limited benefits \cite{Tushar_SPM_2018}. Consequently, as articulated in \cite{Weule_ABCNews_2018}, battery at residential premises is not economically viable. Nonetheless, if a CPS can participate in the local market as a peer and give prosumers to negotiate the price for per unit energy it would sell to the CPS, the proposed model can easily be extended by including CPS as a peer. In such cases, a prosumer can also sell its energy to the CPS through P2P mechanism.
\end{remark}
\section{Properties of the Coalition Formation}\label{sec:properties}
Given the proposed coalition formation algorithm and resultant coalition structure, we now study the properties of the proposed scheme.
\subsection{Stability and optimality}In this section, we investigate the stability and optimality of the proposed coalition formation. To this end, first we define the stability and optimality from \cite{Saad-coalition:2009} and \cite{RongYu_ITJ:2014} as follows. 
\begin{definition}
A group of coalitions is said to be stable, if no prosumer has an interest to perform a merge-and-split operation in order to form another new coalition for better payoff in a selected time slot. This is known as $\mathbb{D}_{hp}$ stable.
\label{def:definition_4}
\end{definition}

%
\begin{definition}
A partition refers to the Pareto optimal network structure if it exhibits the property of $\mathbb{D}_c$ stability with the following characteristics:
\begin{enumerate}
\item The partition is $\mathbb{D}_{hp}$ stable.
\item The resultant partition is the unique outcome of any round of merge-and-split operation.
\item The partition maximizes the social welfare (sum utilities of the participating prosumers).
\end{enumerate}
\label{def:definition_5}
\end{definition}
Second, we note that in every time slot, as the $\Gamma$ is designed, a prosumer $n$ always decides whether it wants to charge or discharge its battery for P2P trading by following scenarios listed in Table~\ref{table:Table_List_Strategy}. This could lead to one of the following two outcomes: 1) Prosumer $n$ may decides either to be in state $2$ and puts its battery into the market for P2P trading, or 2) Prosumer $n$ decides to be in state $1$ and participates in P2P trading without putting its battery into the market. Thus, for case $1$, prosumer $n$ belongs to the coalition of prosumers that are in state $2$. As for case $2$, a prosumer may decide either to form a coalition with other state $1$ prosumers or it may participate in the energy market non-cooperatively (trade with the CPS alone). Nonetheless, it is shown in \cite{Lee_JSAC_2014} that for the considered utilities in \eqref{eqn:Utility_Cost_Type1}, it is always beneficial for prosumers at state $1$ to form a stable group with one another, rather than participate non-cooperatively. Therefore, it is reasonable to state the following Proposition~\ref{Proposition:1}.
\begin{proposition}
In the proposed $\Gamma$, at any time slot $t$, a prosumer $n\in\mathcal{N}$ always forms a coalition with prosumers that are either in state $1$ or state $2$  following the Pareto order defined in Definition~\ref{def:definition_1}, and never trades non-cooperatively.
\label{Proposition:1}
\end{proposition}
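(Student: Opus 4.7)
The plan is to prove the proposition by case analysis on the state chosen by a prosumer $n$ at an arbitrary time slot $t$. In each case the goal is to exhibit a coalition payoff that weakly dominates the payoff obtainable from non-cooperative trading with the CPS, so that the Pareto order of Definition~\ref{def:definition_1} forces $n$ to stay in a coalition rather than defect to direct CPS trade. The two cases correspond exactly to the state assignment produced by the decision table in Table~\ref{table:Table_List_Strategy}: either $n$ lands in state $1$ (battery is not put into the market) or $n$ lands in state $2$ (battery is charged or discharged through P2P).

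For the branch in which $n$ ends up in state $1$, I would invoke the result of \cite{Lee_JSAC_2014} that is already flagged in the paragraph preceding the proposition. Concretely, for the linear utility/cost pair in \eqref{eqn:Utility_Cost_Type1}, non-cooperative trading with the CPS prices the surplus at the FiT and the deficit at the retail tariff, whereas the mid-market prices $p_s(t)$ and $p_b(t)$ used inside the state $1$ coalition satisfy $p_s(t)\ge \mathrm{FiT}$ and $p_b(t)\le \mathrm{retail}$ by construction of the mid-market rule of \cite{Tushar_Access_Oct_2018}. Substituting these inequalities into \eqref{eqn:Utility_Cost_Type1} shows that $U_{n,s}(t)$ obtained inside the coalition is no smaller, and $J_{n,b}(t)$ is no larger, than the corresponding non-cooperative quantities, which is precisely the Pareto-order inequality required by Definition~\ref{def:definition_1}.

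For the branch in which $n$ chooses state $2$, the argument combines the threshold analysis leading to Table~\ref{table:Table_List_Strategy} with the equilibrium prices $p_{d,s2}(t)$ and $p_{c,s2}(t)$ derived in \eqref{eqn:case1_pd} and \eqref{eqn:case1_pc}. A prosumer only remains in state $2$ when its personal thresholds \eqref{eqn:ValueEnCh_4} and \eqref{eqn:ValueEnDis_3} are respected, i.e.\ $p_{c,s2}(t)<p_{n,c}(t)$ for a receiver or $p_{d,s2}(t)>p_{n,d}(t)$ for a provider. Plugging these inequalities into the charging or discharging utility \eqref{Eqn:Utility_Battery_Charge} or \eqref{Eqn:Utility_Battery_DisCharge} evaluated at the optimisers \eqref{eqn:ValueEnCh_2} and \eqref{eqn:ValueEnDis_1} produces a strictly positive surplus relative to the do-nothing benchmark that the CPS offers for battery flows, since the CPS neither purchases discharged battery energy at $p_{d,s2}(t)$ nor sells charging energy at $p_{c,s2}(t)$. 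Hence cooperation strictly Pareto-dominates non-cooperation for every state $2$ prosumer as well.

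I expect the state $2$ branch to be the main obstacle, because $p_{c,s2}(t)$ and $p_{d,s2}(t)$ are determined from the aggregate supply-demand balance $E_d(t)=E_c(t)$ and are not individually tailored to any single prosumer's threshold. The delicate step is therefore to verify that the filtering loops of Algorithm~\ref{alg:algorithm1}, which trim $e_{n,d}(t)$ or $e_{m,c}(t)$ and reassign any prosumer whose adjusted quantity collapses to zero back into $\mathcal{W}$, precisely remove the members for whom the aggregate prices would violate \eqref{eqn:ValueEnCh_4} or \eqref{eqn:ValueEnDis_3}. Once this consistency is established, the individual Pareto-improvement inequalities hold uniformly within the surviving state $2$ coalition, and combining both branches closes the proof.
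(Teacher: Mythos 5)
Your proposal is correct and follows essentially the same route as the paper: a case split on whether $n$ ends up in state $1$ or state $2$ via Table~\ref{table:Table_List_Strategy}, with the state-$1$ branch settled by citing \cite{Lee_JSAC_2014} (the paper's justification, given in the paragraph preceding Proposition~\ref{Proposition:1}, is exactly this, and your mid-market inequality argument via \cite{Tushar_Access_Oct_2018} just makes that citation explicit). The only difference is in the state-$2$ branch: the paper treats it as holding by construction --- a state-$2$ prosumer puts its battery into the market and the model (see the Remark following \eqref{eqn:case1_pc}) provides no non-cooperative CPS outlet for battery energy at all, so there is simply no unilateral alternative to compare against --- whereas you attempt a quantitative dominance argument through the clearing prices \eqref{eqn:case1_pd}--\eqref{eqn:case1_pc} and the individual thresholds. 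The ``delicate step'' you flag (consistency of the aggregate prices with each surviving member's thresholds after the trimming loops of Algorithm~\ref{alg:algorithm1}) is therefore not needed for this proposition; it is really part of the stability/optimality argument of Theorem~\ref{theorem:theorem2}, where the sets $\mathcal{K}$ and $\mathcal{L}$ are characterized by those threshold conditions. So your proof is, if anything, stronger than what the paper asserts, at the cost of carrying a verification burden the paper avoids by design.
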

\begin{theorem}
At any given time slot $t$, the network structure or partitions resulting from the proposed $\Gamma$ is stable and Pareto optimal.
\label{theorem:theorem2}
\end{theorem}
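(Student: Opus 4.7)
The plan is to establish the two claims in Definition~\ref{def:definition_5} in sequence: first $\mathbb{D}_{hp}$ stability, then uniqueness of the merge-and-split outcome, and finally social-welfare optimality. Throughout, I will rely on Proposition~\ref{Proposition:1} to restrict attention to the two coalitions produced by Algorithm~\ref{alg:algorithm1} (the state-$1$ coalition in $\mathcal{W}$ and the state-$2$ coalition in $\mathcal{V}=\mathcal{K}\cup\mathcal{L}$), since no prosumer ever prefers a non-cooperative outcome.

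For $\mathbb{D}_{hp}$ stability, I would argue that no admissible merge or split can strictly Pareto-improve either coalition at time $t$. The key observation is that each prosumer's assignment to a state is determined by comparing the prevailing P2P prices $p_c(t),p_d(t)$ to the individual thresholds $p_{n,c}(t),p_{n,d}(t)$ in \eqref{eqn:ValueEnCh_4} and \eqref{eqn:ValueEnDis_3}, which by construction maximize $U_{n,c}(t)$ or $U_{n,d}(t)$ pointwise (recall $\partial^2 U_{n,c}/\partial e_{n,c}^2=-k\alpha_n<0$ and similarly for $U_{n,d}$). Consequently, moving a state-$1$ prosumer into $\mathcal{V}$ (or vice versa) falls into one of the scenarios of Table~\ref{table:Table_List_Strategy}, each of which would weakly decrease the moving prosumer's utility. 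Merging $\mathcal{W}$ and $\mathcal{V}$ is disallowed because the pricing structure (which explicitly embeds the degradation cost $p_l$) ensures that the case-3 condition always holds across the two groups, as the paper already notes just before \eqref{eqn:analysis_1}. Splits within $\mathcal{W}$ are ruled out by the stability argument for mid-market trading in \cite{Tushar_Access_Oct_2018}, while splits within $\mathcal{V}$ are ruled out because the market-clearing prices $p_{d,s2}(t),p_{c,s2}(t)$ from \eqref{eqn:case1_pd}--\eqref{eqn:case1_pc} already equate supply and demand; any sub-coalition would face a strictly worse clearing price (either fewer buyers or fewer sellers), reducing the traded quantity and hence each participant's utility.

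For the uniqueness condition, I would note that the state-selection rule in Table~\ref{table:Table_List_Strategy} assigns a single state to each prosumer whenever the strict inequalities hold, and at the tie $p_c(t)=p_{n,c}(t)$ (resp.\ $p_d(t)=p_{n,d}(t)$) the prosumer's optimal trade is zero so placing it in either coalition yields the same utility. Hence Algorithm~\ref{alg:algorithm1} produces a partition that is independent of the order in which merges and splits are executed; any alternative sequence converges to the same $\{\mathcal{W},\mathcal{V}\}$ because the individual price comparisons are independent of coalition composition (they depend only on $b_n,s_n(t),p_l,k$ and the market prices, the latter of which are pinned down by \eqref{eqn:case1_pd}--\eqref{eqn:case1_pc} once the state memberships are fixed, and the fixed-point $(\mathcal{K},\mathcal{L},p_{c,s2},p_{d,s2})$ is unique by the monotonicity of \eqref{eqn:analysis_2}--\eqref{eqn:analysis_3} in price).

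The main obstacle will be condition~(3) of Definition~\ref{def:definition_5}: showing that the resulting partition maximizes the sum $\sum_{n} U_n(t)$, not merely each individual utility. My approach is a first-welfare-theorem style argument tailored to this setting. Because the $U_{n,c}$ and $U_{n,d}$ are strictly concave in the trade quantities and the aggregate energy-balance constraint $E_c(t)=E_d(t)$ is linear, the social-welfare problem on $\mathcal{V}$ is a concave program whose KKT conditions couple all participants through a single Lagrange multiplier; matching that multiplier with the P2P price derived in \eqref{eqn:case1_pd}--\eqref{eqn:case1_pc} shows that the individually optimal trades $e_{n,c}^*,e_{n,d}^*$ at the algorithm's equilibrium coincide with the social optimum. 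A parallel argument using the mid-market rule handles $\mathcal{W}$, and the case-3 argument above rules out welfare gains from cross-coalition reallocation. Combining the three ingredients yields $\mathbb{D}_c$ stability, which is Pareto optimality in the sense of Definition~\ref{def:definition_5}, completing the proof of Theorem~\ref{theorem:theorem2}.
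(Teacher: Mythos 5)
Your proposal reaches the same conclusion by the same basic skeleton as the paper --- each prosumer's membership in $\mathcal{W}$, $\mathcal{K}$ or $\mathcal{L}$ is pinned down by comparing the P2P prices with its thresholds \eqref{eqn:ValueEnCh_4}, \eqref{eqn:ValueEnDis_3} and by taking the larger of $U_{n,c}$ and $U_{n,d}$ in the ambiguous case, so no prosumer can gain from a merge-and-split deviation ($\mathbb{D}_{hp}$ stability), and the resulting two-coalition partition $\{\mathcal{W},\mathcal{V}\}$ is then argued to be the unique, welfare-maximizing outcome ($\mathbb{D}_c$ stability, hence Pareto optimality) --- but you go considerably further than the paper on the points the paper handles by assertion. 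The paper's proof stops at: the choices are individually utility-maximal under the Pareto order, hence no incentive to deviate, hence $\mathbb{D}_{hp}$ stable; the partition is unique and maximizes individual benefits, hence their sum, hence $\mathbb{D}_c$ stable. You additionally (i) rule out cross-coalition merges via the case-3 pricing observation (which the paper places in its discussion text, not the proof), (ii) rule out intra-coalition splits using the mid-market stability result of \cite{Tushar_Access_Oct_2018} for $\mathcal{W}$ and the market-clearing prices \eqref{eqn:case1_pd}--\eqref{eqn:case1_pc} for $\mathcal{V}$, (iii) give a fixed-point/monotonicity argument for uniqueness of the partition, and (iv) propose a KKT, first-welfare-theorem argument for condition (3) of Definition~\ref{def:definition_5}. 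This buys genuine rigor where the paper leans on the (logically incomplete) step ``individual maximization implies sum maximization.'' One caveat if you carry out step (iv): because $p_c(t)=(1+\beta)p_d(t)$ with $\beta>0$, the monetary transfers between chargers and dischargers do not cancel in the welfare sum; the wedge must be treated as a genuine per-unit cost in the planner's program (or the welfare claim stated net of the fee), otherwise the equilibrium quantity is distorted relative to the unconstrained social optimum --- a subtlety the paper's own proof never confronts.
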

\begin{proof}
According to Algorithm~\ref{alg:algorithm1}, we note that, at any time slot $t$, prosumers in $\mathcal{L}$ and $\mathcal{K}$ charge and discharge their batteries for P2P trading respectively, where
\begin{equation}
\mathcal{L} = \left\{n:e_{n,c}^{*}(t)> 0~\text{and}~p_c(t)<p_{n,c}(t), p_{d}(t)\geq p_{n,d}(t)\right\}\label{eqn-a1}
\end{equation} and
\begin{equation}
\mathcal{K} = \left\{n:e_{n,d}^{*}(t)>0~\text{and}~p_c(t)\geq p_{n,c}(t), p_{d}(t)> p_{n,d}(t)\right\}.\label{eqn-a2}
\end{equation}
When $p_c(t)<p_{n,c}(t)$ and $p_d(t)>p_{n,d}(t)$, the decision of a prosumer $n\in\mathcal{N}$ to be in $\mathcal{L}$ or $\mathcal{K}$ is determined by the maximum utillity that it achieves from charging and discharging. That is
\begin{eqnarray}
n\in\begin{cases}
\mathcal{L} & \text{if}~ \left . U_{n,c}(t)\right\vert_{e_{n,c}^*(t)\geq 0} > \left . U_{n,d}(t)\right\vert_{e_{n,d}^*(t)\geq 0}\\
\mathcal{K} & \text{if}~ \left . U_{n,d}(t)\right\vert_{e_{n,d}^*(t)\geq 0} > \left . U_{n,c}(t)\right\vert_{e_{n,c}^*(t)\geq 0}
\end{cases}.\label{eqn-a3}
\end{eqnarray}
The rest of the prosumers $\mathcal{N}\setminus\left(\mathcal{L}\cup\mathcal{K}\right)$ remain in state 1 and belong to $\mathcal{W}$. According to \eqref{eqn-a1}, \eqref{eqn-a2}, and \eqref{eqn-a3}, indeed, the choice of  $\mathcal{K}, \mathcal{L}$ (coalition of state 2) and $\mathcal{W}$ (coalition of state 1) by a prosumer $n$ is determined based on the achieved maximum utility by the prosumers (that is, Pareto order in Definition~\ref{def:definition_1}), and hence $n$ cannot be better paid off by choosing an alternative coalition. Consequently, it will have no incentive to split from its current coalition and merge to a new coalition for a better payoff. Therefore,  following Definition~\ref{def:definition_4}, the resultant network structure is $\mathbb{D}_{hp}$ stable.

Further, as $\Gamma$ is designed, at any given time slot $t$, the $\mathbb{D}_{hp}$ stable network structure always consists of two coalitions with set of players $\mathcal{W}$ and $\mathcal{V}$, and satisfies $|\mathcal{W}\cup\mathcal{V}| = N$. This unique set of coalitions maximizes the individual benefit to each prosumers, and consequently their sum benefits is also maximized. Hence, the proposed coalition is $\mathbb{D}_c$ stable. 

Now, since the coalition structure is $\mathbb{D}_c$ stable, according to Definition~\ref{def:definition_4}, the formation of coalition is also Pareto optimal. Thus, Theorem~\ref{theorem:theorem2} is proven.
\end{proof}
\subsection{Prosumer-Centric Property}To investigate whether the proposed P2P energy trading scheme satisfies the prosumer-centric property, we exploit a number of models from motivational psychology. 
\begin{definition}
A P2P energy trading scheme is defined to be prosumer-centric if 
\begin{itemize}
\item The coalition structure formed in $\Gamma$ is stable.
\item The utility received by each prosumer from $\Gamma$ satisfies the rational economic, elaboration likelihood and positive reinforcement models of motivational psychology~\cite{MotivationGame_2019}.
\end{itemize}
\end{definition}
Clearly, from Theorem~\ref{theorem:theorem2}, the proposed coalition formation game possesses stability. Hence, to be prosumer-centric, the proposed P2P energy trading need to satisfy the models from motivational psychology.

Essentially, motivation psychology is a branch of behavioral science that studies the impact of human psychological process to initiate real behavior~\cite{Beebe:1999,Miller:2002}. Motivational psychology consists of a number of behavioral models that can be used to understand whether a developed technology can motivate users to accept it. One such application of motivational psychology models in determining the feasibility of attracting users to efficiently use their heating, ventilation, and air conditioning (HVAC) units can be found in \cite{Tushar_MPsy_2018}. 

While different motivational psychology models can be used to validate the impacts of different technologies for attracting users to participate, in this section, we will focus on three particular models, which are relevant to our study, to demonstrate the prosumer-centric property of the proposed trading scheme. To this end, we first introduce the \emph{rational economic}, the \emph{elaboration likelihood} and the \emph{positive reinforcement models}. Then, we study whether the proposed P2P trading scheme satisfy the properties of the motivational psychology models, so as to exhibit the prosumer-centric property.
\subsubsection{Rational economic model}According to \cite{Shipworth:2000}, a prosumer's participation in energy trading is predominantly based on his economically rational decision. That is, monetary benefit is a key motivator for people to be responsible and logical about participating in energy trading in a P2P network.
\subsubsection{Elaboration likelihood model}According to this model, there could be two ways to communicate and motivate people to participate in energy trading: the central path and the peripheral path~\cite{Petty_1986}. The central path is suitable when an individual cares about the issue and can easily access the necessary information. However, he may potentially deviate from his supported position if the subject conveys unfavorable thoughts due to the ambiguity of the message. In such a case, the peripheral path is more appropriate. Essentially, the peripheral path tries to associate the advocated position with things the receiver already thinks positively towards, such as monetary and environmental benefit, using an expert appeal.
\subsubsection{Positive reinforcement model}A positive reinforcement refers to the case when a human response to a circumstance is followed by a reinforcing stimulus that increases the potential of having the same response from the human when a similar situation arises~\cite{Hockenbury_2003}. For example, by always receiving a better utility by cooperating with other peers within a P2P energy network is likely to encourage the users cooperate with its peers for energy trading again in the future.
%
\begin{theorem}
The proposed social cooperation based P2P energy trading scheme is prosumer-centric.
\label{theorem:theorem4}
\end{theorem}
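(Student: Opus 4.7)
The plan is to verify the two bullet points of the prosumer-centric definition in sequence. The first bullet, stability of the coalition structure produced by $\Gamma$, comes essentially for free: I would simply cite Theorem~\ref{theorem:theorem2}, which already establishes that the partition generated at every time slot $t$ is $\mathbb{D}_{hp}$ stable (and in fact $\mathbb{D}_c$ stable and Pareto optimal). So the entire substance of the proof reduces to showing that the utility structure induced by $\Gamma$ conforms to the rational economic, elaboration likelihood, and positive reinforcement models of motivational psychology.

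For the rational economic model, I would argue directly from the utility expressions in \eqref{eqn:Utility_Cost_Type1}, \eqref{Eqn:Utility_Battery_Charge} and \eqref{Eqn:Utility_Battery_DisCharge}: each prosumer's payoff in coalition~$1$ is $p_s(t)E_{n,\text{sur}}(t) - p_b(t)E_{n,\text{def}}(t)$, which is monetary, and each state~$2$ prosumer's payoff contains the explicit monetary terms $(p_d(t)-p_l)e_{n,d}(t)$ or $-(p_l+p_c(t))e_{n,c}(t)$. Moreover, by the mid-market construction of $p_s(t), p_b(t)$ and by the thresholds \eqref{eqn:ValueEnCh_4}, \eqref{eqn:ValueEnDis_3} that define when a prosumer joins $\mathcal{K}$ or $\mathcal{L}$, the algorithm only admits trades that strictly improve each participant's monetary position relative to trading with the CPS. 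Hence every accepted action is an economically rational one.

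For the elaboration likelihood model, I would point out that the algorithm itself corresponds to the central path (each prosumer is handed the full information it needs: $p_c(t), p_d(t), p_{n,c}(t), p_{n,d}(t), s_n(t), b_{n,c}(t)$, and uses Table~\ref{table:Table_List_Strategy} to decide), while the peripheral path is captured by the fact that even if a prosumer cannot reason through the utility expressions, the explicit monetary inducement (a discharge price above $p_{n,d}(t)$, a charge price below $p_{n,c}(t)$) already appeals through the same kind of positive association the peripheral path describes. For positive reinforcement, I would invoke Proposition~\ref{Proposition:1} together with Theorem~\ref{theorem:theorem2}: in every time slot~$t$ the prosumer is guaranteed, by Pareto order and by the proof of $\mathbb{D}_c$ stability, to obtain a utility at least as large as any alternative (non-cooperative or alternative-coalition) action. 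This repeated, guaranteed-better outcome is exactly the reinforcing stimulus of the positive reinforcement model, and therefore incentivizes recurrent cooperation at subsequent time slots.

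The main obstacle is not mathematical but conceptual: the motivational-psychology models are qualitative behavioral criteria, so the challenge is to cleanly map each qualitative property onto a quantitative feature of the game $\Gamma$ without hand-waving. I would therefore be careful to (i) anchor the rational economic claim on the sign and monotonicity of the payoff expressions, (ii) tie the elaboration likelihood claim to the informational content of Table~\ref{table:Table_List_Strategy} and Algorithm~\ref{alg:algorithm1}, and (iii) tie positive reinforcement to the per-slot Pareto improvement guaranteed by Theorem~\ref{theorem:theorem2}. Combining the three with the stability result then delivers the conclusion.
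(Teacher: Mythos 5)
Your proposal is correct and follows essentially the same route as the paper: stability is inherited directly from Theorem~\ref{theorem:theorem2}, and each of the three motivational-psychology criteria is mapped onto the same features of $\Gamma$ that the paper uses (monetary payoff terms and price thresholds for the rational economic model, the demonstrable improvement in net benefit as the peripheral-path appeal for elaboration likelihood, and the per-slot Pareto-optimal cooperative gain over non-cooperative trading as the reinforcing stimulus). Your extra central-path remark and the explicit appeal to Proposition~\ref{Proposition:1} are harmless additions rather than a different argument.
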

\begin{proof}
To prove the theorem, we note the following: 
\begin{itemize}
\item A prosumer $n$ decides to either charge or discharge its batteries based on the utility functions \eqref{Eqn:Utility_Battery_Charge} and \eqref{Eqn:Utility_Battery_DisCharge}. The utilities to $n$ are dominantly influenced by the buying price $p_{n,d}$ and selling price $p_{n,c}$ as demonstrated in Table~\ref{table:Table_List_Strategy}. In this context, clearly, economic benefit plays key role for a prosumers to choose a stable coalition to perform P2P trading according to Algorithm~\ref{alg:algorithm1}. Thus, the proposed scheme satisfies the rational-economic model.
\item At any time slot $t$, the proposed P2P energy trading is Pareto optimal. Therefore, every time a prosumer $n$ wants to trade energy, participating in P2P energy trading by forming coalitions with other prosumers in the network is always beneficial to $n$, rather than acting noncoopeatively. This subsequently proves that the proposed scheme satisfies positive reinforcement model.
\item The improved net benefit of P2P trading could be an effective way to demonstrate the prosumers of the advantage of participation. Thus, the proposed scheme can easily use this peripheral path to help prosumers understand and convince them to participate in P2P trading. 
\end{itemize}
Thus, the proposed social cooperation based P2P trading satisfies all three considered motivational psychology models and thus exhibits the properties of a prosumer-centric scheme.
\end{proof}

Indeed, charging and discharging of battery cause battery degradation, as discussed in \cite{Wenzl_JPS_2005}. It may discourage some prosumers to use the battery extensively and unwilling to share any battery capacity. Therefore, such prosumers are belong to state 1 in the proposed formulation. Nonetheless, despite battery degradation, battery sharing is a popular phenomenon within the prosumers and they are willing to share their batteries with neighbours as discussed in \cite{Tushar_TSG_2_2016} and \cite{ChaoLong_AE_2018}. Further, as identified by a number of renewable energy service providers, prosumers, at current time, do not emphasize on battery degradation, rather they are more keen to become environmental friendly, reduce their current energy cost, and become grid independent during natural disaster using battery integrated solar system, which has increased the uptake of residential battery storage significantly since 2017~\cite{2018-EM}. Thus, proposed model is valid and important for the decision making process of this set of users.
\section{Case Study}\label{sec:casestudy}In this section, we show some results from numerical case studies to demonstrate the properties of the proposed trading scheme. In particular, we demonstrate 1) how each prosumer may choose to form different coalition that leads to a \emph{stable} coalition framework, 2) how the proposed coalition formation algorithm brings benefit to the prosumers for participating in P2P trading compared non-participating prosumers, and 3) satisfy the prosumer-centric property. For the numerical case study, we use the real-data of solar generation and household energy consumption data available from Redback Technologies. Redback is a Queensland based startup in Australia that provides smart energy solutions to prosumers in Queensland, Victoria, and New South Wales. We use $15$ min sample data to validate the proposed framework and the data used for this case study was collected in February 2018. The values of retailer's time-of-use electricity selling price is also collected from Redback and the FiT price is assumed to be $10$ cents/kWh according to the FiT price used in Brisbane, Australia.
\begin{figure}[t!]
\centering
\includegraphics[width=0.6\columnwidth]{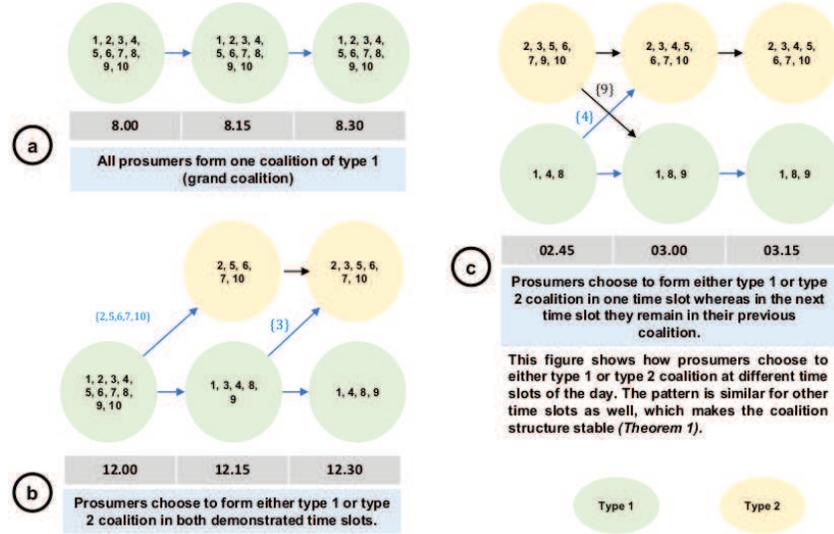}
\caption{This figure demonstrates three different time snaps of a day, in which prosumers within the considered system model decide on various coalition to participate in P2P trading.}
\label{fig:Figure1_Stable}
\end{figure}

\subsubsection{Stability of coalition}In Fig.~\ref{fig:Figure1_Stable}, we show the choice of different coalitions by prosumers at three different group of time slots of a selected day.  Although we choose a selected group of time slots due to the lack of space in the paper to accommodate the demonstration for all $96$ time slots, similar pattern of coalition formation is observed for the rest of the time slots as well. Now, based on this figure, 
\begin{itemize}
\item In Fig.~\ref{fig:Figure1_Stable} (a), all prosumers choose not to charge or discharge their batteries for P2P trading and thus form a grand coalition of state $1$ prosumers in all considered time slots. That is $\mathcal{W} = \{1, 2, \hdots, 10\}$, $\mathcal{V} = \{\phi\}$, and therefore $|\mathcal{W}\cup\mathcal{V}| = |\mathcal{N}| = 10$. 
\item In Fig.~\ref{fig:Figure1_Stable} (b), prosumers form a new set of coalitions in every time slot. However, in all three time slots $|\mathcal{W}\cup\mathcal{V}| = 10$. 
\item Finally, a similar pattern in forming coalition, i.e., $|\mathcal{W}\cup\mathcal{V}| = 10$, is also observed in Fig.~\ref{fig:Figure1_Stable} (c), in which the prosumers form a new coalition during the transition for first time slot to the second whereas they remain in the same coalition in the next time slot. 
\end{itemize}
Thus, in every time slot, the outcome of the game provides a unique outcome, which always contains two coalitions, and once a prosumer chooses a coalition following Algorithm 1, it remains in that coalition until the next time slot of energy trading without any motivation to deviate due to Pareto order (see Definition 1). As such, regardless of whether a new coalition is created in a time slot or not, as the scheme is designed, the coalition structure satisfies Theorem~\ref{theorem:theorem2} and therefore exhibits the property of a stable coalition structure in each time slot. For more details on the stability of coalition formation games, please see \cite{Walid_TWC_Sept_2009}.
\begin{figure}[t]
\centering
\includegraphics[width=0.6\linewidth]{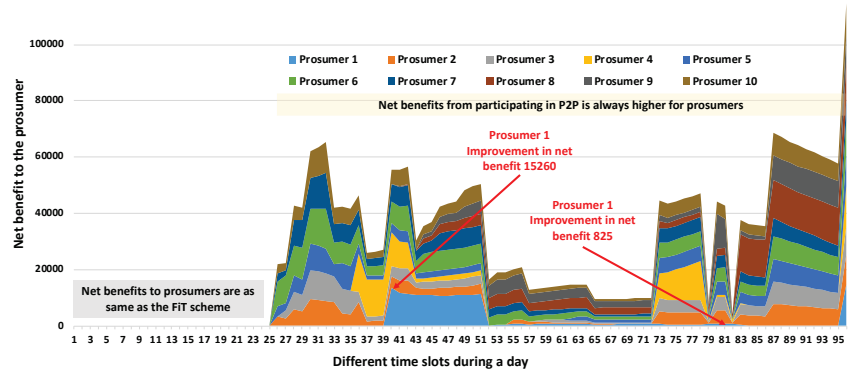}
\caption{Demonstration of net benefits to prosumers for participating in P2P trading via proposed social cooperation framework.}
\label{fig:Fig1}
\end{figure}

\subsubsection{Benefits to prosumers}In Fig.~\ref{fig:Fig1}, we show how the proposed social cooperation framework can improve the net benefits\footnote{Net benefit is the difference between utility and cost of trading energy.} attained by the prosumers. In particular, we show the difference between the net benefits to prosumers with and without social cooperation. By ``without social cooperation", we refer to the FiT scheme, in which a prosumer uses its battery either for storing the excess solar energy from its rooftop solar or for spending the stored energy for household purpose and use only the surplus to trade with the CPS. Now, according to Fig~\ref{fig:Fig1}, first we note that the improvement in net benefits to prosumers for participating in P2P trading vary across both different time slots and across different prosumers. For example, while the improvement in net benefit for prosumer one is as large as $15260$ at time slot $40$, it reduces to as low as $825$ at time slot $82$. In the morning, the improvement in net benefit is zero. The pattern, which is mainly due to the different generation and energy demand pattern of the prosumer at different time slot of the day, is also similarly random for other prosumers as well. The improvement of net benefits for different prosumers are also different for the same reason. Nonetheless, the P2P trading via proposed social cooperation framework is \emph{never detrimental} to any prosumer at any given time slot of the day. As evident from Fig.~\ref{fig:Fig1}, in most of the time slots of the selected day P2P trading demonstrates improvements in net benefits to all the prosumers while the net benefit is at least as good as the FiT scheme to prosumers in rest of the time slots. 

\begin{figure}[t]
\centering
\includegraphics[width=0.6\columnwidth]{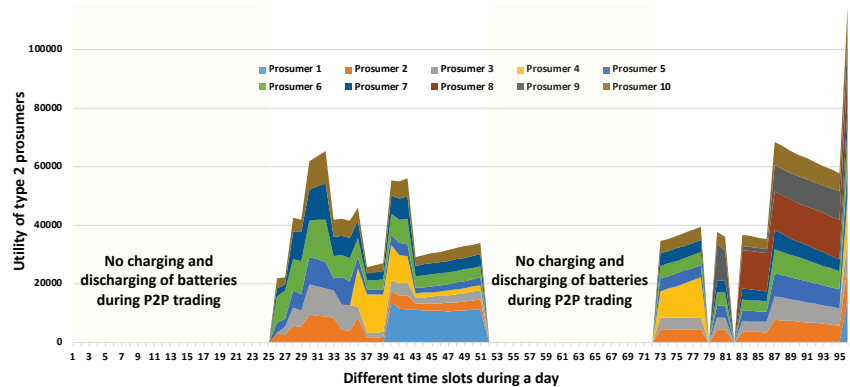}
\caption{Demonstration of opportunistic use of battery by different prosumers while participating in P2P trading.}
\label{fig:Fig2}
\end{figure}
Further, it is also important to note that, as the cooperative framework is designed, the prosumers do not require to always use their batteries for participating in P2P trading. Rather, they can opportunistically choose to trade their battery energy when they find it beneficial for them by following \eqref{Eqn:Utility_Battery_Charge} and \eqref{Eqn:Utility_Battery_DisCharge} and still manage to reap a better (or at least equal) utility compared to the case without cooperation. For instance, as shown in Fig.~\ref{fig:Fig2}, although a number prosumers are interested to trade their battery energy with other peers of the network (at time slots  $25$ to $45$, for example), there is no trade of battery energy from any prosumers during time slots from $50$ to $60$. The benefit for P2P trading is, however, still obvious for time slots $50$ to $60$ as noted in Fig.~\ref{fig:Fig1}. Such opportunistic use of battery helps prosumers to intelligently use their batteries for P2P trading without significantly compromising the lifetime of the batteries. Further, the proposed scheme may help prosumers to estimate how much they may need to invest on their batteries based on their frequency of battery usage for trading purposes (e.g., a prosumer do not need to buy a large battery if the result shows that it would be under used).

\begin{table}[t]
\centering
\caption{This table illustrates how the proposed P2P energy trading scheme exhibits the prosumer-centric property.}
\includegraphics[width=0.8\columnwidth]{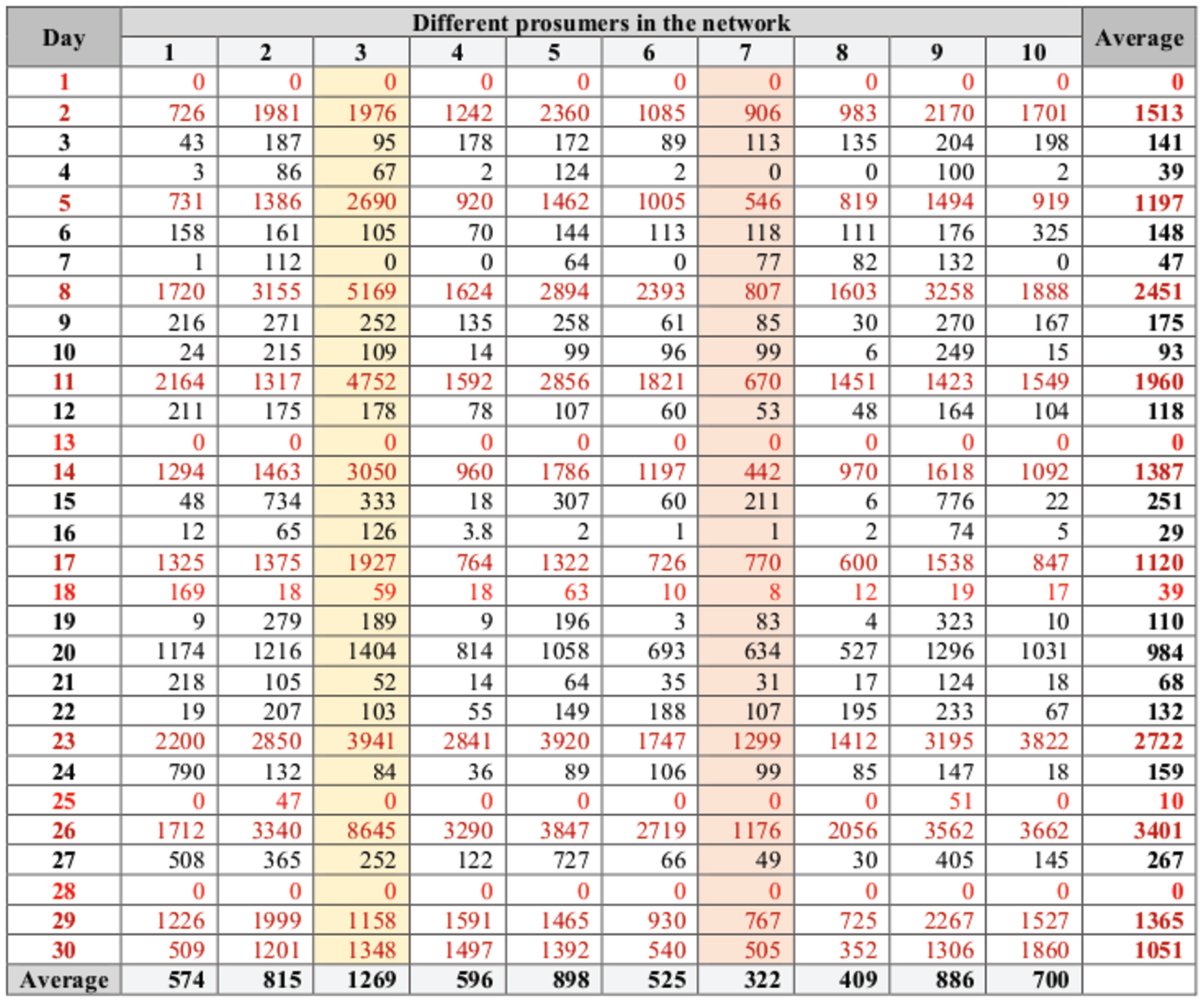}
\label{fig:Fig3}
\end{table}

\subsubsection{Prosumer-centric property}
To show that the proposed P2P trading scheme is prosumer-centric, we show the net benefits of ten prosumers that they achieve by cooperating with one another in P2P trading compared to the case without cooperation in Table~\ref{fig:Fig3}. In the table, we have demonstrated the result for a selected month for the visual clarity of the demonstrated numbers. However, a similar result holds for any other of the year. Now, according to Table~\ref{fig:Fig3}, the benefit to each prosumer for cooperating with one another is more beneficial most of the time and never detrimental compared to participating in the FiT scheme. This characteristic complies with the rational economic and elaboration likelihood properties of the motivational psychology model. Indeed, outcomes of trading can be affected by both weather of the considered day and the type of the prosumer. For example, for prosumer 2, 3, 5, and 9, social cooperation brings higher average benefits compared to other prosumers within the system. On the other hand, the average benefit to prosumer 7 and 8 relatively lower. Similar impacts on the outcome are also observed for different days due to weather conditions. Days, when sunshine hours are relatively longer P2P trading, are proven to be more beneficial. Examples such days in this experiment include days 2, 5, 8, 11, 14, 17, 23, 26, and 29. However, if sunshine hours are limited such as in days 4, 10, 16, and 21, the benefits are lower. Note that for a completely cloudy day, the benefit of P2P is as same as the FiT scheme. This is due to the fact that prosumers do not have any surplus to sell these days. Rather, they use their energy from storage (no solar generation) to meet their own demand only. Nonetheless, the attained net benefit per prosumer shows a consistent performance improvement on sunny days and similar performance on cloudy days when compared to the FiT scheme. Thus, as per definition, it satisfies the positive reinforcement property. Hence, the proposed scheme exhibits prosumer-centric property.

\subsubsection{Computational complexity}Now, we discuss the computational complexity of the proposed scheme. The main computation complexity of the proposed scheme, however, stems from the iterative decision-making process of each prosumer on under which coalition it would like to participate in P2P trading. However, as the proposed scheme is designed, the decision made by each prosumer is determined by a set of simple rules in Table II based on the information already available to the prosumer and through very simple calculations in \eqref{eqn:case1_pd} and \eqref{eqn:case1_pc}. As a result, the computation burden is negligible.

As for the decision of trading parameters, the computation complexity is greatly reduced by the split of overall trading into two different coalitions. In coalition 1, which consists of prosumers at state 1, the computational complexity is very minimal as each prosumer's utility and cost are decided by multiplications of already given parameters. For coalition 2, on the other hand, prices are defined by two closed-form expressions with given parameters for each time slot. Thus, here is the computational burden is also negligible. 

Of course, for a very large number of prosumers, the computation complexity may increase. However, due to the above-mentioned reasons, which would be true across any network, the computational complexity would feasible for adopting the proposed scheme.

\section{Conclusion}\label{sec:conclusion}In this paper, we have studied a peer-to-peer energy trading scheme by exploiting the social cooperation between different prosumers of the network. For this purpose, we have proposed a coalition formation game that can help each participating prosumer to opportunistically decide whether it should put its battery in the peer-to-peer market for energy trading. It has been shown that  the coalition structure that stems from prosumers' social cooperation at each time slot is stable and the resultant peer-to-peer trading scheme is prosumer-centric. Further, we have conducted a number of case studies for the proposed scheme based on Australia based household energy usage and solar generation data and provided some numerical results to show that the proposed scheme can enable prosumers without any storage to participate and still be benefitted from peer-to-peer trading. Further, it has been demonstrated that the proposed trading scheme has the potential to help prosumers to intelligently use their batteries for peer-to-peer trading while consider the degradation cost of their battery. Here it is important to note the economic benefit could be different and more for the feed-in-tariff scheme if the feed-in-tariff rate is significantly large. For example, when feed-in-tariff was first offered to the prosumers in Queensland, Australia, the rate was $44$ cents per kWh. Nevertheless, such a high rate is not being offered anymore and the rate has reduced significantly as articulated in numerous existing studies.

A potential extension of the proposed research is to investigate the impact of such peer-to-peer trading on bus voltages as well as on the overall losses of the network.

\section*{Acknowledgement}This research was funded by the Queensland Government through the Advance Queensland Research Fellowship AQRF11016-17RD2.  

\begin{thebibliography}{10}
\providecommand{\url}[1]{#1}
\csname url@samestyle\endcsname
\providecommand{\newblock}{\relax}
\providecommand{\bibinfo}[2]{#2}
\providecommand{\BIBentrySTDinterwordspacing}{\spaceskip=0pt\relax}
\providecommand{\BIBentryALTinterwordstretchfactor}{4}
\providecommand{\BIBentryALTinterwordspacing}{\spaceskip=\fontdimen2\font plus
\BIBentryALTinterwordstretchfactor\fontdimen3\font minus
  \fontdimen4\font\relax}
\providecommand{\BIBforeignlanguage}[2]{{%
\expandafter\ifx\csname l@#1\endcsname\relax
\typeout{** WARNING: IEEEtran.bst: No hyphenation pattern has been}%
\typeout{** loaded for the language `#1'. Using the pattern for}%
\typeout{** the default language instead.}%
\else
\language=\csname l@#1\endcsname
\fi
#2}}
\providecommand{\BIBdecl}{\relax}
\BIBdecl

\bibitem{Sousa_RSER_Apr_2019}
T.~Sousa, T.~Soares, P.~Pinson, F.~Moret, T.~Baroche, and E.~Sorin,
  ``{Peer-to-peer and community-based markets: A comprehensive review},''
  \emph{Renewable and Sustainable Energy Reviews}, vol. 104, pp. 367--378, Apr.
  2019.

\bibitem{Peck_Spectrum_2017}
M.~E. Peck and D.~Wagman, ``Energy trading for fun and profit buy your
  neighbor's rooftop solar power or sell your own-it'll all be on a
  blockchain,'' \emph{IEEE Spectrum}, vol.~54, no.~10, pp. 56--61, Oct. 2017.

\bibitem{Liang-ChengYe-AE:2017}
L.-C. Ye, J.~F.~D. Rodrigues, and H.~X. Lin, ``Analysis of feed-in tariff
  policies for solar photovoltaic in china 2011-2016,'' \emph{Applied Energy},
  vol. 203, pp. 496--505, Oct. 2017.

\bibitem{Tushar_SPM_2018}
W.~Tushar, C.~Yuen, H.~Mohsenian-Rad, T.~Saha, H.~V. Poor, and K.~L. Wood,
  ``{Transforming energy networks via peer-to-peer energy trading - The
  potential of game-theoretic approaches},'' \emph{IEEE Signal Processing
  Magazine}, vol.~35, no.~4, pp. 2--24, Jul 2018.

\bibitem{FiTPolicy_2015}
{Strategic Futures, Energy Industry Policy}, ``{Queensland solar bonus scheme
  policy guide},'' Department of Energy and Water Supply, State of Queensland,
  QLD, Australia, Report, 2015.

\bibitem{Jogunola_Energies_Dec_2017}
O.~Jogunola, A.~Ikpehai, K.~Anoh, B.~Adebisi, M.~Hammoudeh, S.-Y. Son, and
  G.~Harris, ``State-of-the-art and prospects for peer-to-peer
  transaction-based energy system,'' \emph{MDPI Energies}, vol.~10, no.~12, pp.
  2106:1--2106:28, Dec. 2017.

\bibitem{Morstyn_NE_2018}
T.~Morstyn, N.~Farrell, S.~J. Darby, and M.~D. McCulloch, ``Using peer-to-peer
  energy-trading platforms to incentivize prosumers to form federated power
  plants,'' \emph{Nature Energy}, vol.~3, pp. 94--101, Feb. 2018.

\bibitem{Si_AE_Dec_2018}
F.~Si, J.~Wang, Y.~Han, Q.~Zhao, P.~Han, and Y.~Li, ``{Cost-efficient
  multi-energy management with flexible complementarity strategy for energy
  internet},'' \emph{Applied Energy}, vol. 231, pp. 803--815, Dec. 2018.

\bibitem{Nguyen_AE_Oct_2018}
S.~Nguyen, W.~Peng, P.~Sokolowski, D.~Alahakoon, and X.~Yu, ``{Optimizing
  rooftop photovoltaic distributed generation with battery storage for
  peer-to-peer energy trading},'' \emph{Applied Energy}, vol. 228, pp.
  2567--2580, Oct. 2018.

\bibitem{ChaoLong_AE_2018}
C.~Long, J.~Wu, Y.~Zhou, and N.~Jenkins, ``Peer-to-peer energy sharing through
  a two-stage aggregated battery control in a community microgrid,''
  \emph{Applied Energy}, vol. 226, pp. 261--276, Sep. 2018.

\bibitem{Melendez_AE_Aug_2019}
K.~A. Melendez, V.~Subramanian, T.~K. Das, and C.~Kwon, ``{Empowering end-use
  consumers of electricity to aggregate for demand-side participation},''
  \emph{Applied Energy}, vol. 248, pp. 372--382, Aug. 2019.

\bibitem{Paark_Sustainability_Mar_2018}
L.~W. Park, S.~Lee, and H.~Chang, ``A sustainable home energy prosumer-chain
  methodology with energy tags over the blockchain,'' \emph{MDPI
  Sustainability}, vol.~10, no.~3, pp. 658:1--658:18, Mar. 2018.

\bibitem{Al-Baz_AE_May_2019}
W.~El-Baz, P.~Tzscheutschler, and U.~Wagner, ``{Integration of energy markets
  in microgrids: A double-sided auction with device-oriented bidding
  strategies},'' \emph{Applied Energy}, vol. 241, pp. 625--639, May 2019.

\bibitem{Moret_TPWRS_EA_2018}
F.~{Moret} and P.~{Pinson}, ``Energy collectives: {A} community and fairness
  based approach to future electricity markets,'' \emph{IEEE Transactions on
  Power Systems}, vol.~34, no.~5, pp. 3994--4004, Sep. 2019.

\bibitem{Sorin_TPWRS_Mar_2019}
E.~{Sorin}, L.~{Bobo}, and P.~{Pinson}, ``Consensus-based approach to
  peer-to-peer electricity markets with product differentiation,'' \emph{IEEE
  Transactions on Power Systems}, vol.~34, no.~2, pp. 994--1004, Mar. 2019.

\bibitem{Moret_PSCC_June_2018}
F.~{Moret}, T.~{Baroche}, E.~{Sorin}, and P.~{Pinson}, ``Negotiation algorithms
  for peer-to-peer electricity markets: Computational properties,'' in
  \emph{Power Systems Computation Conference (PSCC)}, Dublin, Ireland, June
  2018, pp. 1--7.

\bibitem{Cadre_Elsevier_Press_2019}
\BIBentryALTinterwordspacing
H.~L. Cadre, P.~Jacquot, C.~Wan, and C.~Alasseur, ``Peer-to-peer electricity
  market analysis: {F}rom variational to generalized nash equilibrium,''
  \emph{European Journal of Operational Research}, 2019, in press. [Online].
  Available: \url{https://doi.org/10.1016/j.ejor.2019.09.035}
\BIBentrySTDinterwordspacing

\bibitem{Siozios_DATE_Mar_2019}
K.~{Siozios} and S.~{Siskos}, ``A low-complexity framework for distributed
  energy market targeting smart-grid,'' in \emph{Proc. of Design, Automation
  Test in Europe Conference Exhibition (DATE)}, Florence, Italy, Mar. 2019, pp.
  878--883.

\bibitem{Chapman_TSG_2018}
\BIBentryALTinterwordspacing
J.~{Guerrero}, A.~C. {Chapman}, and G.~{Verbi\v{c}}, ``Decentralized {P2P}
  energy trading under network constraints in a low-voltage network,''
  \emph{IEEE Transactions on Smart Grid}, 2018. [Online]. Available:
  \url{10.1109/TSG.2018.2878445}
\BIBentrySTDinterwordspacing

\bibitem{Guerrero_AUPEC_Nov_2017}
J.~{Guerrero}, A.~{Chapman}, and G.~{Verbic}, ``A study of energy trading in a
  low-voltage network: Centralised and distributed approaches,'' in \emph{Proc.
  of Australasian Universities Power Engineering Conference (AUPEC)},
  Melbourne, Australia, Nov. 2017, pp. 1--6.

\bibitem{Hamada_AE_Apr_2019}
H.~Almasalma, S.~Claeys, and G.~Deconinck, ``{Peer-to-peer-based integrated
  grid voltage support function for smart photovoltaic inverters},''
  \emph{Applied Energy}, vol. 239, pp. 1037--1048, Apr. 2019.

\bibitem{Zhang_AE_Feb_2019}
X.~Zhang, S.~Zhu, J.~He, and B.~Y.~X. Guan, ``Credit rating based real-time
  energy trading in microgrids,'' \emph{Applied Energy}, vol. 226, pp.
  985--996, Feb. 2019.

\bibitem{Werth_TSG_2018}
A.~Werth, A.~Andre, D.~Kawamoto, T.~Morita, S.~Tajima, D.~Yanagidaira,
  M.~Tokoro, and K.~Tanaka, ``Peer-to-peer control system for dc microgrids,''
  \emph{IEEE Transactions on Smart Grid}, vol.~PP, no.~99, pp. 1--1, 2016.

\bibitem{Pouttu_Conference_EUCNC_2017}
A.~{Pouttu}, J.~{Haapola}, P.~{Ahokangas}, Y.~{Xu},
  M.~{Kopsakangas-Savolainen}, E.~{Porras}, J.~{Matamoros}, C.~{Kalalas},
  J.~{Alonso-Zarate}, F.~D. {Gallego}, J.~M. {Martín}, G.~{Deconinck},
  H.~{Almasalma}, S.~{Clayes}, J.~{Wu}, {Meng Cheng}, F.~{Li}, {Zhipeng Zhang},
  D.~{Rivas}, and S.~{Casado}, ``P2p model for distributed energy trading, grid
  control and ict for local smart grids,'' in \emph{Proc. of European
  Conference on Networks and Communications (EuCNC)}, Oulu, Finland, June 2017,
  pp. 1--6.

\bibitem{Mengelkamp_AppliedEnergy_2017}
E.~Mengelkamp, J.~G{\"a}rttner, K.~Rock, S.~Kessler, L.~Orsini, and
  C.~Weinhardt, ``{Designing microgrid energy markets - A case study: The
  Brooklyn Microgrid},'' \emph{Applied Energy}, vol. 210, pp. 870--880, Jan.
  2018.

\bibitem{PowerLedger_2017}
{Power Ledger}, ``Power {Ledger} white paper,'' Power Ledger Pty Ltd,
  Australia, White paper, 2017,
  \url{https://powerledger.io/media/Power-Ledger-Whitepaper-v3.pdf}.

\bibitem{SmartTest_AE_2018}
C.~Zhang, J.~Wu, Y.~Zhou, M.~Cheng, and C.~Long, ``Peer to peer energy trading
  in a microgrid,'' \emph{Applied Energy}, vol. 220, pp. 1--12, Jun. 2018.

\bibitem{Tushar-TSG:2014}
W.~Tushar, J.~A. Zhang, D.~B. Smith, H.~V. Poor, and S.~Thi{\'{e}}baux,
  ``Prioritizing consumers in smart grid: {A} game theoretic approach,''
  \emph{IEEE Trans. Smart Grid}, vol.~5, no.~3, pp. 1429--1438, May 2014.

\bibitem{Colley_Misc_2014}
A.~Colley, ``Half of users abandon smart meter trial,'' itnews website, 30 Jan.
  2014,
  \url{https://www.itnews.com.au/news/half-of-users-abandon-smart-meter-trial-370919}.

\bibitem{Thomas_PES_2018}
L.~Han, T.~Morstyn, and M.~McCulloch, ``Incentivizing prosumer coalitions with
  energy management using cooperative game theory,'' \emph{IEEE Transactions on
  Power Systems}, 2018, (Early access).

\bibitem{Wong_JSAC_July_2012}
C.~{Joe-Wong}, S.~{Sen}, S.~{Ha}, and M.~{Chiang}, ``Optimized day-ahead
  pricing for smart grids with device-specific scheduling flexibility,''
  \emph{IEEE Journal on Selected Areas in Communications}, vol.~30, no.~6, pp.
  1075--1085, July 2012.

\bibitem{Shams_Energy_July_2018}
M.~H. Shams, M.~Shahabi, and M.~E. Khodayar, ``Stochastic day-ahead scheduling
  of multiple energy carrier microgrids with demand response,'' \emph{Energy},
  vol. 155, pp. 326--338, July 2018.

\bibitem{Baringo_TPWRS_May_2019}
A.~{Baringo}, L.~{Baringo}, and J.~M. {Arroyo}, ``Day-ahead self-scheduling of
  a virtual power plant in energy and reserve electricity markets under
  uncertainty,'' \emph{IEEE Transactions on Power Systems}, vol.~34, no.~3, pp.
  1881--1894, May 2019.

\bibitem{BoChai_TSG_2014}
B.~Chai, J.~Chen, Z.~Yang, and Y.~Zhang, ``Demand response management with
  multiple utility companies: A two-level game approach,'' \emph{IEEE
  Transactions on Smart Grid}, vol.~5, no.~2, pp. 722--731, Mar. 2014.

\bibitem{Maharjan_TSG_Mar_2013}
S.~{Maharjan}, Q.~{Zhu}, Y.~{Zhang}, S.~{Gjessing}, and T.~{Basar},
  ``Dependable demand response management in the smart grid: A stackelberg game
  approach,'' \emph{IEEE Transactions on Smart Grid}, vol.~4, no.~1, pp.
  120--132, Mar. 2013.

\bibitem{Tushar_TSG_Press_2019}
\BIBentryALTinterwordspacing
W.~Tushar, T.~K. Saha, C.~Yuen, T.~Morstyn, N.-A. Masood, H.~V. Poor, and
  R.~Bean, ``Grid influenced peer-to-peer energy trading,'' \emph{IEEE
  Transactions on Smart Grid}, 2019, in press. [Online]. Available:
  \url{https://doi.org/10.1109/TSG.2019.2937981}
\BIBentrySTDinterwordspacing

\bibitem{Tushar_TSG_2_2016}
W.~{Tushar}, B.~{Chai}, C.~{Yuen}, S.~{Huang}, D.~B. {Smith}, H.~V. {Poor}, and
  Z.~{Yang}, ``Energy storage sharing in smart grid: A modified auction-based
  approach,'' \emph{IEEE Transactions on Smart Grid}, vol.~7, no.~3, pp.
  1462--1475, May 2016.

\bibitem{Morstyn_PWRS_2018}
\BIBentryALTinterwordspacing
T.~{Morstyn} and M.~{McCulloch}, ``Multi-class energy management for
  peer-to-peer energy trading driven by prosumer preferences,'' \emph{IEEE
  Transactions on Power Systems}, 2018. [Online]. Available:
  \url{10.1109/TPWRS.2018.2834472}
\BIBentrySTDinterwordspacing

\bibitem{Tushar_Access_Oct_2018}
W.~Tushar, T.~K. Saha, C.~Yuen, P.~Liddell, R.~Bean, and H.~V. Poor., ``{Peer
  to peer energy trading with sustainable user participation: A game theoretic
  approach},'' \emph{IEEE Access}, vol.~6, pp. 62\,932--62\,943, Oct. 2018.

\bibitem{Peterson_JPS_Apr_2018}
S.~B. Peterson, J.~F. Whitacre, and J.~Apt, ``The economics of using plug-in
  hybrid electric vehicle battery packs for grid storage,'' \emph{Journal of
  Power Sources}, vol. 195, no.~8, pp. 2377--2384, Apr. 2010.

\bibitem{Queensland_Electricity_Bill_2019}
\BIBentryALTinterwordspacing
B.~O'Neill, ``{Understanding Queensland electricity tariffs},'' CANSTAR BLUE
  website, July 02 2019. [Online]. Available:
  \url{https://www.canstarblue.com.au/electricity/understanding-queensland-electricity-tariffs/}
\BIBentrySTDinterwordspacing

\bibitem{Queensland_FIT_Bill_2019}
\BIBentryALTinterwordspacing
------, ``{A guide to solar power in Queensland},'' CANSTAR BLUE website,
  September 02 2019. [Online]. Available:
  \url{https://www.canstarblue.com.au/solar-power/guide-solar-power-queensland/}
\BIBentrySTDinterwordspacing

\bibitem{Tushar_TSG_2017}
W.~Tushar, C.~Yuen, D.~B. Smith, and H.~V. Poor, ``Price discrimination for
  energy trading in smart grid: {A} game theoretic approach,'' \emph{IEEE
  Transactions on Smart Grid}, vol.~8, no.~4, pp. 1790--1801, July 2017.

\bibitem{Wayes-J-TSG:2012}
W.~Tushar, W.~Saad, H.~V. Poor, and D.~B. Smith, ``Economics of electric
  vehicle charging: {A} game theoretic approach,'' \emph{IEEE Transactions on
  Smart Grid}, vol.~3, no.~4, pp. 1767--1778, Dec 2012.

\bibitem{Samadi_SmartGridComm_2010}
P.~Samadi, A.~H. Mohsenian-Rad, R.~Schober, V.~W.~S. Wong, and J.~Jatskevich,
  ``Optimal real-time pricing algorithm based on utility maximization for smart
  grid,'' in \emph{IEEE International Conference on Smart Grid Communications},
  Gaithersburg, MD, Oct. 2010, pp. 415--420.

\bibitem{Ma_TCST_2015}
Z.~Ma, S.~Zou, and X.~Liu, ``A distributed charging coordination for
  large-scale plug-in electric vehicles considering battery degradation cost,''
  \emph{IEEE Transactions on Control Systems Technology}, vol.~23, no.~5, pp.
  2044--2052, Sept 2015.

\bibitem{Fahrioglu_GT_1999}
M.~{Fahrioglu} and F.~L. {Alvarado}, ``Designing cost effective demand
  management contracts using game theory,'' in \emph{IEEE Power Engineering
  Society Winter Meeting (Cat. No.99CH36233)}, vol.~1, New York, NY, Jan. 1999,
  pp. 427--432 vol.1.

\bibitem{ChaiBo-TSG:2014}
B.~Chai, J.~Chen, Z.~Yang, and Y.~Zhang, ``Demand response management with
  multiple utility companies: {A} two-level game approach,'' \emph{IEEE Trans.
  Smart Grid}, vol.~5, no.~2, pp. 722--731, March 2014.

\bibitem{Samadi_TSG_Sept_2012}
P.~{Samadi}, H.~{Mohsenian-Rad}, R.~{Schober}, and V.~W.~S. {Wong}, ``Advanced
  demand side management for the future smart grid using mechanism design,''
  \emph{IEEE Transactions on Smart Grid}, vol.~3, no.~3, pp. 1170--1180, Sept.
  2012.

\bibitem{Krzysztof_GTR_2009}
K.~R. Apt and A.~Witzel, ``{A generic approach to coalition formation},''
  \emph{International Game Theory Review}, vol.~11, no.~3, pp. 347--367, Sep.
  2009.

\bibitem{Long_Conf_2017}
C.~Long, J.~Wu, C.~Zhang, L.~Thomas, M.~Cheng, and N.~Jenkins, ``{Peer-to-peer
  energy trading in a community microgrid},'' in \emph{Proc. IEEE PES General
  Meeting}, Chicago, IL, July 2017, pp. 1--5.

\bibitem{RongYu_ITJ:2014}
R.~Yu, J.~Ding, W.~Zhong, Y.~Liu, and S.~Xie, ``{PHEV charging and discharging
  cooperation in V2G networks: A coalition game approach},'' \emph{IEEE
  Internet of Things Journal}, vol.~1, no.~6, pp. 578--589, Dec 2014.

\bibitem{Weule_ABCNews_2018}
\BIBentryALTinterwordspacing
G.~Weule, ``Solar power: {D}oes it make economic sense to buy batteries now or
  should you wait?'' ABC NEWS website, 16 Aug. 2018, accessed on Dec. 09, 2019.
  [Online]. Available:
  \url{https://www.itnews.com.au/news/half-of-users-abandon-smart-meter-trial-370919}
\BIBentrySTDinterwordspacing

\bibitem{Saad-coalition:2009}
W.~Saad, Z.~Han, M.~Debbah, A.~Hj{\o}rungnes, and T.~Ba{\c{s}}ar, ``Coalitional
  game theory for communication networks,'' \emph{IEEE Signal Processing
  Magazine}, vol.~26, no.~5, pp. 77--97, Sept. 2009.

\bibitem{Lee_JSAC_2014}
W.~Lee, L.~Xiang, R.~Schober, and V.~W.~S. Wong, ``Direct electricity trading
  in smart grid: A coalitional game analysis,'' \emph{IEEE Journal on Selected
  Areas in Communications}, vol.~32, no.~7, pp. 1398--1411, July 2014.

\bibitem{MotivationGame_2019}
W.~Tushar, T.~K. Saha, C.~Yuen, T.~Morstyn, M.~D. McCulloch, H.~V. Poor, and
  K.~L. Wood, ``A motivational game-theoretic approach for peer-to-peer energy
  trading in the smart grid,'' \emph{Applied Energy}, vol. 243, pp. 10--20,
  June 2019.

\bibitem{Beebe:1999}
S.~A. Beebe, S.~J. Beebe, and M.~V. Redmond, \emph{Interpersonal Communication:
  Relating to Others}.\hskip 1em plus 0.5em minus 0.4em\relax Boston, MA: Allyn
  \& Bacon., 1999.

\bibitem{Miller:2002}
W.~R. Miller and S.~Rollnick, \emph{Motivational Interviewing: Preparing People
  for Change}.\hskip 1em plus 0.5em minus 0.4em\relax New York, NY: Guilford
  Press, 2002.

\bibitem{Tushar_MPsy_2018}
W.~Tushar, C.~Yuen, W.~T. Li, D.~B. Smith, T.~Saha, and K.~L. Wood,
  ``Motivational psychology driven ac management scheme: A responsive design
  approach,'' \emph{IEEE Transactions on Computational Social Systems}, vol.~5,
  no.~1, pp. 289--301, Mar. 2018.

\bibitem{Shipworth:2000}
M.~Shipworth, ``{Motivating home energy action - A handbook of what works},''
  Australian Greenhouse Office, Australia, Report, Apr. 2000.

\bibitem{Petty_1986}
R.~E. Petty and J.~T. Cacioppo, \emph{{Communication and persuasion: Central
  and peripheral routes to attitude change}}.\hskip 1em plus 0.5em minus
  0.4em\relax New York, NY: Springer-Verlag, 1986.

\bibitem{Hockenbury_2003}
D.~H. Hockenbury and S.~E. Hockenbury, \emph{Psychology}.\hskip 1em plus 0.5em
  minus 0.4em\relax Gordonsville, VA: Worth Publishers, 2003.

\bibitem{Wenzl_JPS_2005}
H.~Wenzl, I.~Baring-Gould, R.~Kaiser, B.~Y. Liaw, P.~Lundsager, J.~Manwell,
  A.~Ruddell, and V.~Svoboda, ``Life prediction of batteries for selecting the
  technically most suitable and cost effective battery,'' \emph{Journal of
  Power Sources}, vol. 144, no.~2, pp. 373--384, June 2005.

\bibitem{2018-EM}
\BIBentryALTinterwordspacing
{ENERGY MATTERS}, ``Australia on the cusp of an energy storage boom,'' website,
  February 15 2018. [Online]. Available:
  \url{https://www.energymatters.com.au/renewable-news/australia-energy-storage-boom/}
\BIBentrySTDinterwordspacing

\bibitem{Walid_TWC_Sept_2009}
W.~Saad, Z.~Han, M.~Debbah, and A.~Hj{\o}rungnes, ``A distributed coalition
  formation framework for fair user cooperation in wireless networks,''
  \emph{IEEE Transactions on Wireless Communications}, vol.~8, no.~9, pp.
  4580--4593, Sept. 2009.

\end{thebibliography}

\end{document}